\documentclass{article} 
\usepackage{iclr2027_conference,times}
\usepackage{tikz}
\usepackage{tikz-3dplot}
\usepackage{caption}
\usepackage{amssymb}

\usetikzlibrary{arrows.meta,calc,positioning}

\definecolor{cubeFront}{HTML}{BFD3E6}
\definecolor{cubeSide} {HTML}{8FAFCC}
\definecolor{cubeTop}  {HTML}{DCE7F2}
\definecolor{vecPool}  {HTML}{9FC8B8}
\definecolor{vecLogit} {HTML}{E8B98A}
\definecolor{edgeCol}  {HTML}{3E4A56}
\definecolor{txtCol}   {HTML}{2B2F33}

\usepackage{amsmath,amsfonts,bm}

\def\eqref#1{equation~\ref{#1}}
\def\1{\bm{1}}

\DeclareMathAlphabet{\mathsfit}{\encodingdefault}{\sfdefault}{m}{sl}
\SetMathAlphabet{\mathsfit}{bold}{\encodingdefault}{\sfdefault}{bx}{n}

\newcommand{\E}{\mathbb{E}}

\DeclareMathOperator{\sign}{sign}

\usepackage{hyperref}
\usepackage{url}
\usepackage{booktabs} 
\usepackage{multirow}
\usepackage{amsthm}
\usepackage{pifont}
\newcommand{\cmark}{\ding{51}}
\newcommand{\xmark}{\ding{55}}

\numberwithin{appendixprop}{subsection}
\newtheorem{appendixdef}{Definition}
\numberwithin{appendixdef}{section}

\title{Optimizing and Securing the Modern Watermarking Channel for Images}

\author{Enoal Gesny \& Eva Giboulot \\
Inria, Rennes, France\\
\texttt{\{enoal.gesny\}@inria.fr} \\
}

\iclrfinalcopy 
\begin{document}

\newcommand{\evag}[1]{\textcolor{purple}{#1}}
\newcommand{\enoal}[1]{\textcolor{blue}{#1}}
\newcommand\diffmod[1]{\diffmodOP\left({#1}\right)}
\def \Hzero{\mathcal{H}_0}
\def \Hone{\mathcal{H}_1}
\def\un{\mathbbm{1}}
\def\VAE{\mathrm{VAE}}
\def\ie{\textit{i.e.}}
\def\sign{\mathrm{sign}}
\def\PFA{\mathrm{P}_{\mathrm{FA}}}
\def\PD{\mathrm{P}_{\mathrm{D}}}

\newcommand{\method}{\textcolor{blue}{METHOD }}

\def\VSS{\texttt{V-SSig}}
\def\VVS{\texttt{V-VS}}
\def\VTM{\texttt{V-TM}}

\def\GSS{\texttt{G-SSig}}
\def\GVS{\texttt{G-VS}}
\def\GTM{\texttt{G-TM}}

\def\GVSS{\texttt{GV-SSig}}
\def\GVVS{\texttt{GV-VS}}
\def\GVTM{\texttt{GV-TM}}

\def\SS{\texttt{SSig}}
\def\VS{\texttt{VS}}
\def\TM{\texttt{TM}}

\def\method{\textsc{SNW}~}

\def\bx{\mathbf{x}}
\def\bz{\mathbf{z}}
\def\bk{\mathbf{k}}
\def\bv{\mathbf{v}}
\def\bb{\mathbf{b}}

\def\bU{\mathbf{U}}

\newtheorem{proposition}{Proposition}
\newtheorem{property}{Property}
\newtheorem{corollary}{Corollary}
\newtheorem{definition}{Definition}

\def\keyset{\mathcal{K}}
\def\E{\mathbb{E}}
\def\P{\mathbb{P}}
\def\V{\mathbb{V}}

\def\partition{\mathcal{P}}

\def\Pe{\mathrm{P}_e}

\def\bx{\mathbf{x}}
\def\bX{\mathbf{X}}
\def\bY{\mathbf{Y}}
\def\bZ{\mathbf{Z}}
\def\bW{\mathbf{W}}
\def\bI{\mathbf{I}}
\def\Mprime{M^{\prime}}

\def\bxatk{\mathbf{x}_{\mathrm{atk}}}
\def\bxwm{\mathbf{x}_{\mathrm{wm}}}
\def\btxwm{\tilde{\mathbf{x}}_{\mathrm{wm}}}

\def\bm{\mathbf{m}}
\def\bc{\mathbf{c}}
\def\bu{\mathbf{u}}
\def\bU{\mathbf{U}}
\def\bn{\mathbf{n}}
\def\bw{\mathbf{w}}
\def\bq{\mathbf{q}}

\def\DKL{\mathrm{D}_{\mathrm{KL}}}

\def\beps{\boldsymbol{\epsilon}}
\def\bmu{\boldsymbol{\mu}}
\def\bsigma{\boldsymbol{\sigma}}
\def\bSigma{\boldsymbol{\Sigma}}

\def\DKL{\mathrm{D}_{\mathrm{KL}}}

\usetikzlibrary{arrows.meta, calc, fit, positioning, backgrounds}
 
\definecolor{blueBox}{RGB}{210,228,252}
\definecolor{blueBorder}{RGB}{70,130,210}
\definecolor{blueTitle}{RGB}{50,110,190}
 
\definecolor{redBox}{RGB}{252,215,215}
\definecolor{redBorder}{RGB}{210,80,80}
\definecolor{redTitle}{RGB}{190,50,50}
 
\definecolor{greenBox}{RGB}{215,240,220}
\definecolor{greenBorder}{RGB}{70,165,80}
\definecolor{greenTitle}{RGB}{50,140,60}
 
\definecolor{arrowGray}{RGB}{60,60,60}
 
\newlength{\imgW}\setlength{\imgW}{1.6cm}
\newlength{\imgH}\setlength{\imgH}{1.6cm}
 
\tikzset{
  myarrow/.style={->, >=Stealth, thick, dashed, arrowGray,
                  shorten >=2pt, shorten <=2pt},
  solidarrow/.style={->, >=Stealth, thick, arrowGray,
                     shorten >=2pt, shorten <=2pt}
}

\maketitle

\begin{abstract}
To comply with recent regulations requiring traceable generated content, modern watermarking has adopted multi-bit post-hoc watermarking schemes. These modern designs rest on an encoder-decoder pair implemented as deep neural networks.
These models are usually treated as pure black-boxes trained end-to-end, with the noise of the watermarking channel modeled through a fixed set of geometric and valuemetric transforms applied to watermarked images. 
We argue that this purely empirical approach leads to unquestioned design flaws and a lack of theoretical performance guarantees.
This work proposes a general theoretical model of modern post-hoc watermarking schemes grounded in a statistical analysis of the outputs of the encoder/decoder pair. We show that these deep neural networks implicitly define a watermarking channel modeled as parallel AWGN channels, with messages transmitted using BPSK modulation. This imposes a binary alphabet, greatly limiting the capacity of these watermarking systems. 
Another fatal flaw is their lack of a secret key, making them intrinsically insecure. We make this notion of watermarking security precise for post-hoc schemes by linking it to the possibility of estimating the secret key under a given statistical model of the decoder's output.
By putting together the results from this theoretical analysis, we introduce SNW: a novel post-hoc watermarking system that significantly outperforms existing state-of-the-art baselines in terms of capacity while also providing strong security guarantees. Notably, it does not depend on a fixed codebook or binary alphabet, allowing it to reach a rate close to Shannon capacity through the use of capacity-achieving error-correcting codes.
\end{abstract}
\section{Introduction}

Digital media watermarking is going through a striking revival since the early 2020's, accompanying the usage growth of generative models. At the time of writing, the EU AI Act~\cite{european_data_protection_supervisor_ai_2025} has entered into full force, along with its Code of Practice~\cite{ai_office_general-purpose_2025}, making the use of marking technology mandatory for providers of generative models and services.

For this task, post-hoc watermarking has been largely favored by the industry over in-generation techniques. An instructive evidence is the SynthID report for image watermarking~\cite{gowal_synthid-image_2025}. It explicitly defines what is expected, in Google DeepMind's view, from a watermarking system -- Section 2.1 -- and singles out multi-bit  post-hoc watermarking as the most practical approach -- Section 2.2. 

This focus of the industry on multi-bit post schemes is also reflected in the open-source literature: Meta's \textsc{Seal} family~\cite{fernandez_video_2024, petrov_we_2025,soucek_pixel_2025} and Adobe's \textsc{Trustmark}~\cite{bui_trustmark_2025} proposed \textit{almost only} post-hoc multi-bit schemes\footnote{The only exception being \textsc{DistSeal}~\cite{rebuffi_learning_2026}, a seed-based in-generation scheme.}.

Such a convergence from the industry warrants a rigorous and critical evaluation methodology for this family of watermarking systems, something which, we believe, is currently missing in the literature.

In order to introduce our argument, it is fruitful to distinguish between two categories of post-hoc schemes: classical and modern. The former, exemplified by Spread-Spectrum~\cite{cox_secure_1997} and Broken-Arrows~\cite{furon_broken_2008}, rely on handcrafted, often invertible and orthonormal transforms to construct their watermarking space. The modern approach was pioneered in 2018 by the \textsc{HiDDeN} architecture and training recipe~\cite{ferrari_hidden_2018}, and refined across the 2020's with the aforementioned \textsc{Seal} family, \textsc{Trustmark} and \textsc{SynthID-Image}. The main idea rests on training end-to-end a pair of encoder/decoder deep neural networks (DNN). Noise in the watermarking channel is modelled with a finite set of image augmentations, applied after encoding. The pair of DNN is optimized to be as robust as possible to these augmentations, while keeping the watermarking signal imperceptible.

There is a longstanding problem in classical watermarking: one cannot be robust \textit{at the same time} to all geometric transforms. For example, if one builds a watermarking space invariant to rotation, scaling, and translation using the Fourier-Mellin transform, one loses robustness to cropping\cite{bas_geometrically_2002}. Solving this so-called geometric robustness problem has often relied on synchronization strategies which bring their own difficulties.

The promise of the modern approach is to solve the geometric robustness problem, \textit{for free}. By simply adding both scaling and cropping to the set of augmentations during training, \textsc{VideoSeal} empirically demonstrated~\cite{fernandez_video_2024}[Appendix B.2] that robustness to both operations was indeed possible and was observed, to some extent, for every modern approach.

Despite this breakthrough, a number of recent works have been pointing out major security flaws in the design of modern watermarking schemes, putting into question the "free-lunch" claim of these schemes. These critiques relate to two main weaknesses: 1) the lack of a secret key to hide the message, making the watermarking system highly vulnerable under Kerckhoffs' principle~\cite{bas_ai_2025, tarhini_neural_2026} and 2) the reliance on DNN which largely opens up the attack surface related to adversarial security~\cite{jiang_evading_2023, kassis_unmarker_2025, gesny_modern_2026}. These works demonstrated that, for these systems, it is not only possible but simple and inexpensive to erase, copy, or forge a watermarking signal. 

This also questions the value of modern schemes beyond the geometric problem: classical schemes seem to perform as well in terms of robustness against valuemetric augmentations (JPEG compression, linear filtering, saturation, $\ldots$), yet they are far less vulnerable to both white-box and black-box attacks~\cite{gesny_modern_2026}. The problem seems even more pressing given that, of all the modern schemes cited so far, only one, \textsc{SynthID-Image}, \textit{mentions} the notion of security.

Another limitation of the modern literature is methodological. Modern systems all rely on the same end-to-end training recipe, with innovations being mostly architectural. A case in point is ChunkySeal~\cite{petrov_we_2025}: its authors are surprised that the theoretical watermarking capacity achievable for images is far from being attained by current schemes. Their answer is to train a larger model. In Section~\ref{sec:channel} we demonstrate that the inefficiencies of current schemes actually stem from implicit design choices induced by the architecture and training pipeline. We then show how to \textbf{increase the capacity of modern schemes almost threefold} without increasing the size of the encoder/decoder, nor changing their architecture, and with significantly less training compute.

The main goal of this work is to clarify and systematize the design of post-hoc watermarking systems. Notably, we want to emphasize the security aspect of watermarking, which is too often neglected in the modern literature: 

\begin{itemize}
\item We propose a general framework for defining and evaluating each component of modern post-hoc systems -- Section~\ref{sec:formalism}. In particular, we propose a formal distinction between watermarking security and adversarial robustness.
\item We provide a theoretical model of the modern post-hoc watermarking channel as an Additive White Gaussian Channel (AWGN), with the watermarking signal transmitted using Binary Phase-Shift Keying (BPSK). We empirically validate this model on state-of-the-art watermarking systems, demonstrating at the same time fundamental limitations of this design -- Section~\ref{sec:channel}.
\item We design a novel watermarking system, Secure Neural Watermarking (SNW), which integrates a secret key in the embedding and decoding mechanisms. It reaches a maximum capacity of 656 bits with a binary alphabet, significantly outperforming the 256 bits current state-of-the-art with the same DNN architecture, while being perfectly secure against PCA attacks -- Section~\ref{sec:method}-\ref{sec:snw_theoretical}.
\end{itemize}

\section{Multi-bit Watermarking systems}\label{sec:formalism}

\subsection{Problem formulation}
 Cox defines multi-bit watermarking as the "reliable transmission of a message over an unreliable channel"~\cite{cox_watermarking_2006}.
 A useful illustrative scenario for this problem is the attribution scenario that we adapt from~\cite{gesny_secure_2026}[Section 2]:

 \paragraph{Watermarking attribution scenario} Alice supplies an API where users can request images to be generated. In order to trace the use of her system, she asks Bob, a third-party, to provide her with a secret key $k$ and a post-hoc watermarking system $\mathcal{W}$. From the point of view of Bob, the secret key $k$ is now uniquely linked to Alice. She then associates a unique ID $\bm_{\mathrm{user}}$ with each user. Each request generates an unwatermarked image which is passed to $\mathcal{W}$ in order to watermark it. The client is served only the watermarked content. If Bob is presented with the key $k$ and a watermarked image generated for user $i$, he should decode the correct message $\bm_i$. If the image was not watermarked, or the key is incorrect, the decoded message should be random.%\footnote{See Def~\ref and its scholia.}.

Our threat model is focused on spoofing attacks under Kerchoff's principle. Note that shifting to a threat model on erasure attacks requires only slight changes in the analysis, which mostly amounts to working on watermarked images instead of non-watermarked ones. 

\paragraph{Threat model -- Spoofing Attack} Eve is a malicious third party who observes a collection of $N$ pristine watermarked images created by independent users of Alice's API. 
Eve aims to spoof a specific user's identity: Camille's. Her identity is recorded as the message $\bm_{\text{Camille}}$ through the codeword $\bc_{Camille}$. 
Kerckhoff, a malicious colleague of Alice, provides Eve with complete access to the watermarking pipeline \textbf{except} Alice's secret key $k$. We further assume that Eve has access to at least one image generated by Camille -- though she does not know $\bc_{Camille}$ a priori. We assume that when decoding an image, Eve does not introduce any noise (i.e., perfect channel assumption).

\subsection{Main definitions}

\paragraph{Message vs Codeword}For a $M$-bit multi-bit system, it is important to distinguish a message $\bm \in \{0,1\}^{M}$ from a codeword $\bc$. A codeword is the object that is eventually mapped into a message using an error-correcting code -- see the redundancy mechanism definition Def~\ref{def:redundancy_mechanism}. In our case, a codeword can be either a sequence of bits in $\{0,1\}^{M^{\prime}}$ or a real vector in $\mathbb{R}^{M^{\prime}}$. We impose that $M^{\prime} \geq M$. We sometimes leave the alphabet of the codeword unspecified, in which case we denote it as $\mathcal{A}^{M^\prime}$.

\paragraph{Post-hoc watermarking systems} We adapt the general formulation from~\cite{gesny_secure_2026}[Def. B.2] to post-hoc watermarking systems $\mathcal{W}$ as:
 \begin{itemize}
 	\item A set of secret keys $\mathcal{K}$
 	\item A family of \textit{deterministic} embedding functions $(e_k)_{k\in \mathcal{K}}$ : $\mathbb{R}^{L_e} \times \mathcal{A}^{M^\prime} \rightarrow \mathbb{R}^L$, embedding the watermark signal into the host content, with respect to the secret key $k$. 
 	\item Two forward projection functions $f_e: \mathbb{R}^{D} \to \mathbb{R}^{L_e}$ and $f_d: \mathbb{R}^{D} \to \mathbb{R}^L$, projecting the content from pixel space into watermarking space. 
 	\item A backward projection function $f_e^{\dagger}: \mathbb{R}^{L_e} \to \mathbb{R}^{D}$
 	\item A family of decision mechanisms $(d_k)_{k\in \mathcal{K}}: \mathbb{R}^L \to \mathcal{A}^{M^\prime}$, extracting the codeword $\bc$ from an observation in watermarking space using the secret key $k$. 
 \end{itemize}

For the rest of the paper, we denote by $\mathcal{F}$ the distribution of cover images. Following our watermarking scenario, the cover distribution should be such that it is not biased towards a given message -- see Def.~\ref{def:cover_distrib}.

\subsection{Watermarking security vs Adversarial robustness}\label{subsec:wmsec}
A post-hoc watermarking system offers two main attack surfaces for Eve in our spoofing scenario: 1) unauthorized access to the watermarking channel if the secret key is stolen and 2) copying the latent vector $f_d(\bxwm)$ of an image for which $\bm$ is known using an adversarial example.

The current literature usually conflates these two approaches into a single concept of security. We argue in Appendix~\ref{app:secu_adv_robustness} that they are two very different problems, with different consequences. We thus distinguish between true watermarking security, pertaining to embedding/decoding mechanism $(e_k, d)$, and adversarial robustness, which pertains to the decoding projection function $f_d$.

\begin{definition}[Watermarking Security]\label{def:security}
    Let $\mathcal{W}$ be a watermarking system.
    Let $X^{(N)}$ be a set of $N$ independent images watermarked with a system $\mathcal{W}$. Denote $Z_i := f_d(X_i)$.
    
    Let $\psi_N : (\mathbb{R}^L)^N \rightarrow \mathcal{K}$ be a function that estimates a secret key from a set of watermarked observations. 
    
    For a given codeword $\bc$ and unwatermarked image $\bx$, denote the probability of attack success as:
    \begin{equation}
        \mathrm{P}_{\mathrm{atk}}(N, \bx) := \P [d(f_d(\bxatk^{(N)}, k)) = \bc],
    \end{equation}
    where $\bxatk^{(N)} := f_e^{\dagger}(e_{\psi_N(Z^{(N)})}(f_e(\bx), \bc))$.
    The watermarking system $\mathcal{W}$ is said to be $\eta L$-secure iff:
    \begin{equation}
        \E\left[P_{\mathrm{atk}}(\eta L,Y)\right] \leq \E\left[\P[d(f_d(Y),K) = C]\right],
    \end{equation}
    where the expectation is taken over $(Y, C, K)$ triplets, with $\bx$ sampled from the cover distribution $\mathcal{F}$, and $(C, K)$ sampled uniformly from their respective set.
\end{definition}

In words, a watermarking system is $\frac{N}{L}$-secure if, on average, providing Eve with $N$ independent observations does not improve her spoofing success beyond random chance. We will only focus on the watermarking security aspect; adversarial robustness is a wholly different problem that needs to be tackled with adversarial machine learning tools, which is out of the scope of this paper (but see Appendix~\ref{app:secu_adv_robustness} for an evaluation of the current state of things).

\section{The modern post-hoc watermarking channel}\label{sec:channel}
\begin{figure}[ht]
    \centering
    \includegraphics[width=0.5\linewidth]{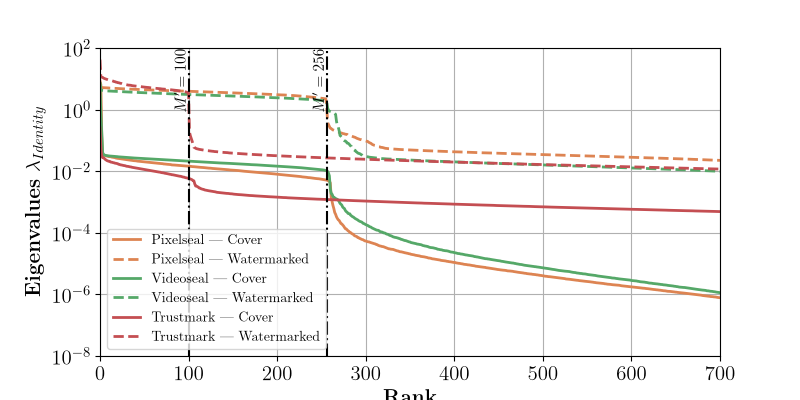}%
    \includegraphics[width=0.4\linewidth]{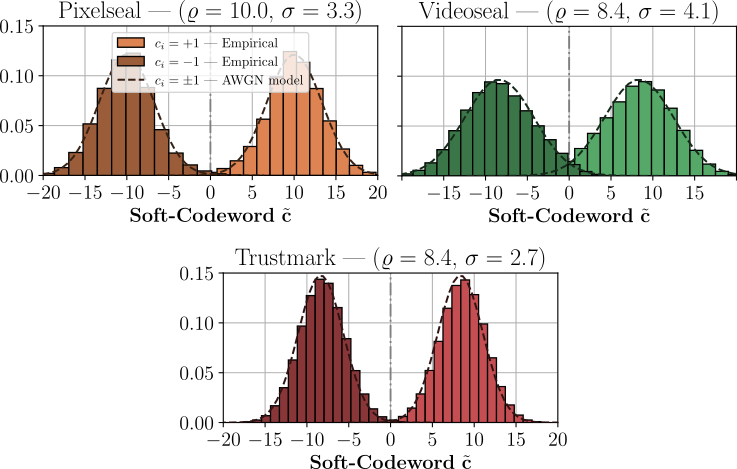}
    \caption{Empirical analysis of the decoding projection $f_d$ and decision mechanism $d$ of three SOTA post-hoc watermarking systems computed over 10k $1024 \times 1024$ ImageNet~\cite{deng2009imagenet} images watermarked with random messages.\textit{ (Left)} Eigenvalues 
    of the covariance $\Sigma_{\mathrm{Identity}}$ of the latent vector $f_d(\bxwm)$ (i.e. when $t$ is the identity). \textit{(Right)} Global distribution of the soft-codeword values $\tilde{\bc}$ and the corresponding theoretical AWGN model with $(\varrho,  \sigma)$ computed empirically.}
    \label{fig:modern-channel-study}
\end{figure}

\begin{table}[ht]
    \centering
    \small 
    \resizebox{1\linewidth}{!}{
    \begin{tabular}{l ccccc | ccccc}
        \toprule
        & \multicolumn{5}{c}{\textbf{Channel characteristic} $\varrho\sigma^{-1}$($\uparrow$) }  &  \multicolumn{5}{c}{\textbf{Capacity} $C_(\varrho,\sigma)$ / \textbf{Empirical Capacity}}\\
        \cmidrule(lr){2-11}
         & Identity & JPEG 50 & Contrast $\times 2$ & Crop 0.6 & Rot. 90$^\circ$ & Identity & JPEG 50 & Contrast $\times 2$ & Crop 0.6 & Rot. 90$^\circ$ \\
        \midrule
        \textsc{PixelSeal}     & 3.07& 2.21 & 1.60  & 1.73 & 1.97 & 0.99 / 0.96 & 0.90 / 0.88 & 0.69 / 0.72  & 0.75 / 0.75 & 0.84 / 0.79 \\
        \textsc{VideoSeal}     & 2.03 & 1.64 & 1.03 & 0.71 & 1.17 & 0.85 / 0.85 & 0.71 / 0.75 & 0.38 / 0.47 & 0.21 / 0.21 & 0.74 / 0.75 \\
        \textsc{TrustMark}     & 3.11 & 2.67 & 1.26 & 0.26 & 0.20 & 0.99 / 0.98 & 0.96 / 0.95 & 0.52 / 0.58 & 0.03 / 0.00 & 0.02 / 0.00 \\
        \bottomrule
    \end{tabular}
    }
    \caption{Channel characteristic $\varrho\sigma^{-1}$ computed empirically over 10k $1024 \times 1024$ ImageNet images watermarked with random messages and the corresponding theoretical capacity in message bits per codeword bit.}
    \label{tab:channel-charac}
\end{table}
Let $\bx$ be an unwatermarked image and $t : \mathbb{R}^D \rightarrow \mathbb{R}^D$ a transform in pixel space (which can be the identity function).
The deep neural networks serving as the projection functions of modern watermarking systems all rely on the same design:

\paragraph{Embedder $(f_e, f_e^\dagger, e)$} The forward and backward embedding projections are implemented with a single auto-encoder\footnote{The HiDDeN embedder was more complicated, but recent designs have all streamlined the embedding down to one single auto-encoder.}. It is composed of a downsampling stage $f_e$ -- the forward embedding projection  -- a bottleneck, where current systems usually apply the embedding function $e$, and an upsampling stage $f_e^\dagger$ -- the backward embedding projection. The embedding function $e$ maps the codeword $\bc \in \{0,1\}^{\Mprime}$ to what we call a \textit{steering vector} $\bv(\bc) : \{0,1\}^{\Mprime} \rightarrow \mathbb{R}^{\Mprime}$. This mapping is fixed in advance, for example, VideoSeal and PixelSeal use WAM's method~\cite{sander2025watermark} of mapping the $i$-th component $c_i$ to one of two a priori fixed number $(v^-_i, v^+_i)$ depending on the parity of $c_i$. The steering vector $\bv(\bc)$ is then mixed within the downsampling features $f_e(\bx)$. To summarize, the embedding pipeline can be written as:
\begin{align}\label{eq:modern-steering}
    \mathbf{w} &= f_e^{\dagger}(e(f_e(\bx), \bc)) \text{ where } e(\cdot, \bc) := \mathrm{Mix}(\cdot, \bv(\bc))\\
    \bxwm &= \bx + 10^{\frac{\gamma}{20}} \sqrt{\mathbf{w}}||\mathbf{w}||_2^{-1},
\end{align}
where $\gamma$ controls the PSNR of the watermarking signal and $\mathrm{Mix}$ is an unspecified function mixing the steering vector with the input (in practice implemented as concatenation). Before reaching the decoder, the watermarked image gets perturbed by the transform $t$: $\btxwm := t(\bxwm)$.

\paragraph{Decoder $f_d$} The decoding stage is always implemented as a convolutional neural network (CNN) acting as a bit classifier.
The decoding projection itself encompasses the main CNN module and the final pooling operation that outputs a $L$-dimensional latent vector $\bz =f_d(\btxwm)$. Empirically, we observe that $\bz$ can be modeled as following a multivariate Gaussian distribution $\mathcal{N}(\bmu_z, \Sigma_t)$. In practice, none of the studied systems have $L = \Mprime$. Consequently, we must assume $\Sigma_t$ to be only semi-definite, i.e the Gaussian may be degenerate.

\paragraph{Decision mechanism $d$} Finally, a linear head further reduces the latent vector down to what we call the soft-codeword $\tilde{\bc}$.
The final codeword $\bc$ is then decoded by applying the sign function: $\bc = \mathrm{sign}(\bW\bz + \bb)$.

In order to understand the role of the linear head, we turn our attention to the structure of $\Sigma_t$ in Figure~\ref{fig:modern-channel-study} (left). As expected, the rank of these covariance matrices is never $L$. One can observe a "step-like" behavior within the eigenvalues: $M^{\prime}$ of them are high, while the rest quickly goes to zero. We call the former the \textit{robust components} of the latent vector. Though it could be mistaken for estimation noise, there is indeed a small subset of non-zero eigenvalues with smaller magnitude, these we call the \textit{brittle components}. The rest of the eigenvectors have eigenvalues of zero: these dimensions are purely \textit{redundant}.

The fact that $\Sigma_t$ is rank deficient sheds some light on role of the linear head: it performs a whitening operation, getting rid of the redundant and brittle components, as well as unbiasing the codeword estimation. We propose to model $\tilde{\bc}$ as following a multivariate Gaussian with diagonal covariance $\mathcal{N}(\varrho_t \bc, \bsigma_t\bI_{\Mprime})$ -- see Figure~\ref{fig:modern-channel-study} (right).

From this analysis, we claim that the watermarking channel induced by the encoder/decoder DNN pair can be modeled with $\Mprime$ parallel AWGN channels and a watermarking signal modulated using BPSK. We record here the well-known results about such communication systems:

\begin{proposition}[Capacity of AWGN channel with BPSK]
    The capacity $C_{\varrho, \sigma}$ of an AWGN channel with variance $\sigma^2$  and signal transmitted with power $\varrho$ and BPSK modulation is
        $C_{\varrho, \sigma} = h_2\left(1-\Phi\left(-|\varrho|\sigma^{-1}\right)\right)$,
    where $h_2$ is the binary entropy function, and $\Phi$ the standard Gaussian c.d.f.
    Notably, $\forall \varrho > 0, \forall \sigma \geq 0, C_{\varrho, \sigma} \leq 1 $.
\end{proposition}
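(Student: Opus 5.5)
The plan is to reduce the channel, as it is actually used by the decision mechanism, to a binary symmetric channel (BSC), and then invoke the classical BSC capacity formula. I would fix one of the $\Mprime$ parallel components. The model adopted just above gives $\tilde c = \varrho s + n$, where $s\in\{-1,+1\}$ is the BPSK symbol for the codeword bit and $n\sim\mathcal{N}(0,\sigma^2)$ is independent of $s$. The decoder only keeps $\hat s = \mathrm{sign}(\tilde c)$. So the effective channel from $s$ to $\hat s$ is binary-input, binary-output, and I would compute its transition probabilities first.

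\textbf{Step 1: crossover probability.} For $s=+1$, an error occurs iff $n < -\varrho$, which has probability $\Phi(-\varrho/\sigma)$. For $s=-1$, an error occurs iff $n > \varrho$, which by symmetry of the Gaussian has the same probability. If $\varrho<0$, the decoder simply flips its convention, which gives the $|\varrho|$. The channel is therefore a BSC with crossover $p=\Phi(-|\varrho|\sigma^{-1})$, and $1-p = 1-\Phi(-|\varrho|\sigma^{-1})$.

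\textbf{Step 2: capacity of the BSC.} I would write $I(s;\hat s)=H(\hat s)-H(\hat s\mid s)$. Here $H(\hat s\mid s)=h_2(p)$ for any input law, and $H(\hat s)\le 1$, with equality for the uniform input (which the uniform codeword distribution supplies). This gives a capacity of $1-h_2(p)$. Since $h_2(p)=h_2(1-p)$, this is the quantity the proposition expresses through $h_2\!\left(1-\Phi(-|\varrho|\sigma^{-1})\right)$.

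\textbf{Main obstacle: reconciling the formula with the table.} Read literally, the displayed formula is $h_2(p)$ itself, i.e.\ the conditional entropy, whereas the capacity is $1-h_2(p)$. The intended reading is confirmed by Table~\ref{tab:channel-charac}: for \textsc{PixelSeal} under the identity transform, $\varrho\sigma^{-1}=3.07$ gives $p\approx 10^{-3}$, hence $1-h_2(p)\approx 0.99$, matching the table entry. I would therefore state explicitly that the capacity is understood as $1-h_2(\cdot)$. I would also point out that this is the hard-decision capacity. The soft-output BPSK capacity $I(s;\tilde c)$ is larger, but it is not achievable by the sign-based decision mechanism $d$ defined above.

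\textbf{Step 3: the bound $C_{\varrho,\sigma}\le 1$.} Under either reading it is immediate: $0\le h_2\le 1$ on $[0,1]$, so both $h_2(p)\le 1$ and $1-h_2(p)\le 1$. More conceptually, $I(s;\hat s)\le H(s)\le\log_2 2 = 1$ for any binary-input channel. The degenerate case $\sigma=0$ is handled by continuity ($p=0$, so the capacity equals $1$).
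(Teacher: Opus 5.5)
The paper gives no proof of this proposition. It introduces it as one of ``the well-known results about such communication systems'' and simply states it. Your argument supplies the standard derivation behind that result and is correct: hard decision by $\mathrm{sign}$, reduction to a BSC with crossover $p=\Phi(-|\varrho|\sigma^{-1})$, and capacity $1-h_2(p)$ attained by the uniform input.

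Your correction of the displayed formula is right and should be stated explicitly. Read literally, $h_2(1-\Phi(-|\varrho|\sigma^{-1}))$ would give capacity $0$ for a noiseless channel, and about $0.012$ for \textsc{PixelSeal} under the identity. Table~\ref{tab:channel-charac} reports $0.99$ for that entry, and the paper's own Proposition~\ref{prop:bit_acc} uses the form $M'(1-h_2(p(\rho)))$.

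Your remark that this is the hard-decision capacity, strictly below the soft-output BPSK capacity $I(s;\tilde c)$ by data processing, also correctly identifies what the paper computes. Its decision mechanism applies $\mathrm{sign}$ before any error correction.
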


In the case of a binary alphabet, BPSK modulation is optimal for an AWGN channel in the sense that it maximizes the theoretically achievable capacity; but it limits the capacity to 1 bit/channel element. We report the $(\varrho, \sigma)$ for three state-of-the-art watermarking systems and validate the predictivity of this model in Table~\ref{tab:channel-charac}.

\paragraph{Security of keyless systems} Modern watermarking systems following this model have no security since they have no secret key. Even if we allow their linear head to be kept secret -- disregarding Kerchoff's principle -- they are still vulnerable to a linear estimation attack: Eve can simply watermark $L+1$ images with $L+1$ independent messages (the encoding does not depend on a secret key, nor on the linear head). She then applies the decoding projection to these images, giving her $L+1$ latent vectors $\bz$. Since she knows $\bc$ and $\bz$, she can solve the well-defined system $\bW\bz + \bb = \bc$ for $\bW$ and $\bb$. She can then recover Camille's codeword and spoof as many images as she wishes.
\section{\method: Secure Neural Watermarking}
\label{sec:method}

So far, we have outlined three main limitations of modern watermarking systems: 1) lack of a secret key, making spoofing attacks trivial, 2) inefficient use of the watermark space, with redundant latent components, and 3) a capacity limited to 1 message bit per codeword bit. To address these, we propose \method, a secure post-hoc watermarking design that allows the use of a continuous alphabet and is designed to maximize the use of watermark space.

\subsection{Secure decision mechanism and embedding functions $(e_k, d)$}
In order for our system to be secure, we integrate a secret key into both the embedding and decision mechanisms. We define the secret key set $\keyset$ as the set of all semi-orthogonal matrices of dimension $L \times \Mprime$, denoted $\bU$, i.e. $ \forall \bU \in \keyset, \bU^T\bU = \bI_{\Mprime}$. Let $\bc$ be a codeword from a \textit{continous} alphabet $\mathbb{R}^{\Mprime}$. Analogously to Eq.(\ref{eq:modern-steering}), we define the embedding function $e_\bU$ using a normalized \textit{steering vector} $\bv \in \mathbb{S}^{L-1}$ constructed by distributing the watermarking energy between the secret subspace defined by $\bU$ and its orthogonal complement:

\begin{align}
    e_\bU(\bz, \bc) = \mathrm{Mix}(\bz, \bv) \quad ; \quad
     \bv := \alpha \frac{\mathbf{U}\bc}{\sqrt{M^\prime}} + \sqrt{1 - \alpha^2} \frac{(\mathbf{I}_L - \mathbf{U}\mathbf{U}^\top)\bq}{\|(\mathbf{I}_L - \mathbf{U}\mathbf{U}^\top)\bq\|_2},
     \label{eq:steering_vector}
\end{align}

Where $\bq$ is realization of a standard Gaussian random variable $\mathcal{N}(0, I_L)$ and $\alpha \in [0,1]$. We show in Section~\ref{sec:snw_theoretical} that $\alpha$ controls the trade-off between capacity and security.

Assuming $f_d$ can perfectly retrieve $\bv$, the decision mechanism $d_\bU$  extracts $\bc$ simply by applying the secret rotation $\bU$ to $\bv$: $d_\bU = \bU^T\bv = \bc \alpha\sqrt{\Mprime}^{1}$. 
Note that, by further applying the $\mathrm{\sign}$ function after $d_\bU$, we retrieve the BPSK modulation in Section~\ref{sec:channel}. Yet, since $\bc$ is real-valued, we are also free to use any off-the-shelf capacity-achieving codes for AWGN channels such as nested lattice codes~\cite{zamir_lattice_2014}.

\subsection{Learning isotropic and robust neural projections $(f_e, f_e^\dagger, f_d)$}

Our decoding projection needs to retrieve the steering vector $\bv$ with the least amount of noise possible. However, contrary to other modern designs, we allow embedding any arbitrary vector on the $L-1$ hypersphere. Furthermore, in order to maximize capacity, we don't want the decoding function to favor certain regions of the hypersphere, i.e. we want the distribution of $f_d$'s output to have a covariance $\Sigma$ with full rank.

To achieve this,  we optimize jointly $(f_e, f_e^{\dagger}, f_d)$ through the following objective balancing alignment, latent space isotropy, and image quality:

\begin{equation}
    \mathcal{L} = \lambda_{\text{align}} \mathcal{L}_{\text{align}}(f_d(\bxwm), \bv) + \lambda_{\text{iso}}\mathcal{L}_{\text{iso}}(f_d(\bxwm), f_d(\bx)) + \lambda_{\text{qual}} \mathcal{L}_{\text{qual}}(\bx, \bxwm).
    \label{eq:loss}
\end{equation}

 \paragraph{Alignment $\mathcal{L}_{\text{align}}$} Maximizes the cosine similarity between the extracted representation $\bz$ and the target vector $\bv$. 
    \paragraph{Isotropy $\mathcal{L}_{\text{iso}}$} Minimizes correlations by penalizing non-zero inner products across non-watermarked images or images watermarked with different codewords. This enforces a uniformly distributed spherical latent space. 
    \paragraph{Quality $\mathcal{L}_{\text{qual}}$} Preserves the visual fidelity of the host content by combining a constraint on the PSNR of the watermark power and an LPIPS perceptual metric~\cite{zhang2018perceptual}. 

We provide complete definitions of all loss components in Appendix~\ref{app:losses} and details on the training procedure in Appendix~\ref{app:training_details}.
\section{\method theoretical analysis}\label{sec:snw_theoretical}

\subsection{Capacity}
Let $\bxwm$ be an image watermarked with \method and $t: \mathbb{R}^D \rightarrow \mathbb{R}^D$ a transform in pixel space (which can be the identity). Denote $\tilde{\bv} := (f_d \circ t) (\bxwm)$ the normalized steering vector extracted from the transformed $\bxwm$.
We model $\tilde{\bv}$ as a perturbation of the true steering vector $\bv$:
\begin{equation}
    \tilde{\bv}= \rho \bv + \sqrt{1 - \rho^2} \, \mathbf{n}, \quad \text{with } \mathbf{n} \sim \mathcal{U}(\mathbb{S}^{L-2}), \quad \mathbf{n} \perp \bv.
    \label{eq:perturbation_model}
\end{equation}

In practice, note that due to the decoding projection $f_d$ imperfections, even if $t$ is the identity function, we don't expect $\rho=1$.

We now provide the \method capacity when using a binary alphabet for the codeword $\bc$:

\begin{proposition}[Binary \method capacity]
    Under the perturbation model defined in Equation~\ref{eq:perturbation_model}, the probability $p(\rho)$ of correctly decoding a bit is given by:
    \begin{equation}
        p(\rho) = \Phi\left(\alpha\frac{\rho}{\sqrt{1 - \rho^2}}\sqrt{\frac{L}{M^\prime}} \right),
    \end{equation}
    \label{prop:bit_acc}
    where $\Phi$ is the standard Gaussian c.d.f.
    The resulting Shannon capacity is: $C_\rho = M^\prime \left(1 - h_2(p(\rho))\right)$.
\end{proposition}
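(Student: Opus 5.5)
The plan is to compute the soft codeword $\tilde{\bc} := \bU^\top \tilde{\bv}$ directly from the perturbation model (\ref{eq:perturbation_model}) and read off each coordinate's distribution.

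\textbf{Signal term.} Since $\bU^\top\bU = \bI_{\Mprime}$ and $\bU^\top(\bI_L-\bU\bU^\top)=0$, Equation~(\ref{eq:steering_vector}) gives
\[
\bU^\top \bv = \alpha\,\bc/\sqrt{\Mprime}.
\]
With $\bc\in\{\pm1\}^{\Mprime}$, coordinate $i$ of $\tilde{\bc}$ is therefore
\[
\tilde c_i = \rho\alpha c_i/\sqrt{\Mprime} + \sqrt{1-\rho^2}\,\bu_i^\top \bn ,
\]
where $\bu_i$ is the $i$-th column of $\bU$.

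\textbf{Noise term.} Here $\bn$ is uniform on the unit sphere of $\bv^\perp$, which has dimension $L-1$. I would use the standard high-dimensional approximation: the projection of a uniform unit vector in $\mathbb{R}^{L-1}$ onto a fixed unit direction is approximately $\mathcal{N}(0,1/(L-1))\approx\mathcal{N}(0,1/L)$.

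One subtlety is that $\bu_i$ is not exactly in $\bv^\perp$. Its component along $\bv$ is $\alpha c_i/\sqrt{\Mprime}$, which is small, so the projection of $\bu_i$ onto $\bv^\perp$ has norm $\approx 1$. I would neglect this $O(1/\Mprime)$ correction, noting that it is absorbed in the same approximation order as $L-1\approx L$. This Gaussian approximation is the main obstacle to an exact statement, since the exact law is a scaled Beta. I would state it as the asymptotic regime $L\to\infty$ with $\Mprime/L$ fixed, invoking the classical Poincaré–Borel result that low-dimensional projections of the uniform sphere measure converge to a Gaussian.

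\textbf{Per-bit error.} The sign decision $\hat c_i=\sign(\tilde c_i)$ is correct exactly when $c_i\tilde c_i>0$. Using the symmetry of the noise,
\[
p(\rho)=\P\!\left[\tfrac{\sqrt{1-\rho^2}}{\sqrt L}\, g > -\tfrac{\rho\alpha}{\sqrt{\Mprime}}\right] = \Phi\!\left(\alpha\tfrac{\rho}{\sqrt{1-\rho^2}}\sqrt{\tfrac{L}{\Mprime}}\right),
\qquad g\sim\mathcal{N}(0,1).
\]

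\textbf{Capacity.} Each coordinate is a binary symmetric channel with crossover probability $1-p(\rho)$, whose capacity is $1-h_2(p(\rho))$ by the symmetry $h_2(p)=h_2(1-p)$. This is the same BPSK/AWGN computation as in the earlier Proposition on BPSK capacity, now with $\varrho\sigma^{-1}=\alpha\rho\sqrt{L}/\sqrt{(1-\rho^2)\Mprime}$.

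To sum over the $\Mprime$ coordinates, I would argue that the noise coordinates $\bu_i^\top\bn$ are asymptotically independent. The columns $\bu_i$ are orthonormal, so the covariance of $\bU^\top\bn$ is approximately $\bI_{\Mprime}/L$, and jointly Gaussian, uncorrelated variables are independent. The channel is then $\Mprime$ parallel independent binary symmetric channels. Capacity is additive over independent parallel channels, which yields $C_\rho=\Mprime(1-h_2(p(\rho)))$.
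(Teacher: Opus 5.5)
Your proposal follows the paper's proof essentially step for step: $\bU^\top\bv=\alpha\bc/\sqrt{\Mprime}$, the approximation $(\bU^\top\bn)_j\approx\mathcal{N}(0,1/L)$, and the sign-decision probability $\Phi\bigl(\alpha\rho\sqrt{L}/\sqrt{(1-\rho^2)\Mprime}\bigr)$; it is also more careful than the paper, which ignores the $\bv$-component of $\bu_i$ and asserts the capacity formula without argument. The one weak point is your additivity step: with $\Mprime/L$ fixed the full vector $\bU^\top\bn$ is not asymptotically jointly Gaussian (Poincar\'e--Borel only covers $o(L)$ coordinates, and at $\Mprime=L$, $\alpha=1$ one has exactly $\|\bU^\top\bn\|_2=1$ and $\bc^\top\bU^\top\bn=0$), so either state per-coordinate independence as a modeling assumption, as the paper implicitly does, or use $I(\bc;\hat{\bc})\ge\sum_i I(c_i;\hat c_i)$, which holds for independent uniform inputs, to get $\Mprime(1-h_2(p(\rho)))$ as a lower bound without needing independent noise.
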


\begin{proof}
See Appendix~\ref{subsec:proof_bit_acc}. 
\end{proof}

As shown in Figure~\ref{fig:bit_acc_th}, maximizing decoding robustness corresponds to $\alpha = 1$, where all embedding energy is allocated to the watermark signal.

\begin{figure}
    \centering
    \includegraphics[width=\linewidth]{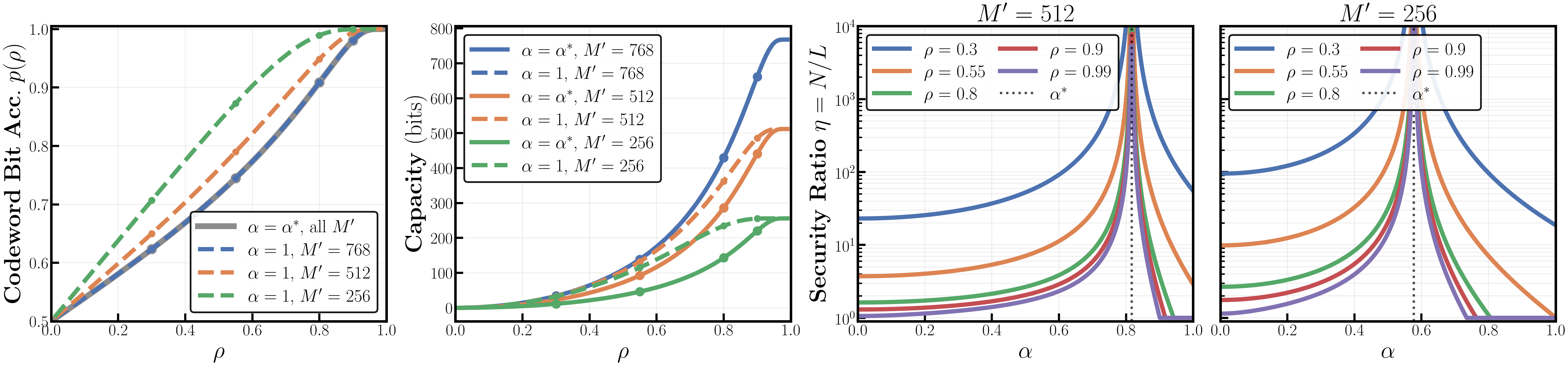}
    \caption{Theoretical bit accuracy $p(\rho)$ (left) and Shannon capacity in bits (right) as a function of the projection cosine alignment $\rho$. Setting $\alpha = \sqrt{M^\prime/L}$ guarantees perfect security against PCA attacks while maintaining an invariant bit accuracy across all codeword dimensions $M^\prime$. Dotted lines represent the maximum robustness setting, when $\alpha = 1$. Security ratio $\eta$ required to estimate the secret subspace with a PCA attack as a function of $\alpha$ for different values of alignment $\rho$ and $\Mprime \in  \{256, 512\}$.}
    \label{fig:bit_acc_th}
\end{figure}

\subsection{Watermarking Security}

The attacker aims to retrieve the secret matrix $\bU$ from $N \geq L$ watermarked images with different codewords. Under our statistical model, Principal Component Analysis (PCA) can be used to do so.
Eve estimates the covariance matrix $\mathbf{\Sigma}_N = \frac{1}{N}\sum_{i=1}^N \bz_i \bz_i^\top$. The expected eigenvalues have two possible values, each spanning subspaces of dimensions $M^\prime$ and $L - M^\prime$, respectively:
\begin{equation}
    \lambda_1 = \rho^2 \frac{\alpha^2}{M^\prime} + \frac{1 - \rho^2}{L}, \qquad
    \lambda_2 = \rho^2 \frac{1 - \alpha^2}{L - M^\prime} + \frac{1 - \rho^2}{L},
    \label{eq:eigenvalues}
\end{equation}

For un-watermarked images, the Marchenko-Pastur distribution gives the asymptotic support $S_0$ of empirical eigenvalues as $N \rightarrow \infty$~\cite{vallet_improved_2012,furon_fast_2013}. 
Furthermore, the theory shows that one cannot distinguish between covariances if the support $S_1$ and $S_2$ of the Marchenko-Pastur distribution associated with $\lambda_1$ and $\lambda_2$ are such that $S_1 \cup S_2 \subseteq S_0$: the PCA is not able to estimate any dimension of the secret key.

\begin{proposition}[\method PCA security]
    A \method system is $\eta L$-secure against PCA attack, requiring at least $N = \eta L$ watermark observations to retrieve the watermark subspace, with:
\begin{equation}
    \frac{N}{L} = \eta = \max\left(1 , \frac{\left( 1 - \sqrt{\lambda_1 M'} \right)^2}{L \left( \sqrt{\lambda_1} - \frac{1}{\sqrt{L}} \right)^2}\right).
\end{equation}
In particular, by setting $\alpha$ to $\alpha^* := \sqrt{\frac{M^\prime}{L}}$ we have that $N \rightarrow \infty$ and the system is said to be \textit{perfectly secure} against PCA attacks.
\label{prop:security}
\end{proposition}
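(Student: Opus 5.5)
The plan is to reduce the PCA attack to a comparison of Marchenko--Pastur supports, block by block, using the criterion stated just before the proposition: PCA recovers no direction of the secret subspace as long as $S_1\cup S_2\subseteq S_0$.

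\textbf{Step 1 (population covariance).} Take the codewords $\bc$ independent with i.i.d.\ centred entries of unit variance, and $\bq\sim\mathcal{N}(0,\bI_L)$. From Equation~\ref{eq:steering_vector} and the perturbation model of Equation~\ref{eq:perturbation_model}, compute $\E[\tilde{\bv}\tilde{\bv}^\top]$ as follows.
\begin{itemize}
\item The term $\alpha\bU\bc/\sqrt{M'}$ contributes $\frac{\alpha^2}{M'}\bU\bU^\top$.
\item The normalized projection of $\bq$ onto the orthogonal complement is uniform on the sphere of that $(L-M')$-dimensional subspace. It therefore contributes $\frac{1-\alpha^2}{L-M'}(\bI_L-\bU\bU^\top)$.
\item By rotational symmetry, the noise direction $\mathbf{n}$ averages to the isotropic term $\frac{1-\rho^2}{L}\bI_L$. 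The small correction coming from $\mathbf{n}\perp\bv$ is neglected, consistent with the model.
\end{itemize}
This yields the two eigenvalues $\lambda_1$ and $\lambda_2$ of Equation~\ref{eq:eigenvalues}, on the secret subspace and its complement respectively. As a sanity check, the trace identity $M'\lambda_1+(L-M')\lambda_2=1$ holds, since $\|\tilde{\bv}\|_2=1$.

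\textbf{Step 2 (reference and block supports).} Because every observation has unit norm, the null model (unwatermarked images, or equivalently isotropic latents) has covariance $\frac1L\bI_L$. With aspect ratio $L/N$, its Marchenko--Pastur support is
\[
S_0=\left[\tfrac1L\left(1-\sqrt{L/N}\right)^2,\ \tfrac1L\left(1+\sqrt{L/N}\right)^2\right].
\]
Treat the $M'$-dimensional block with variance $\lambda_1$ as its own population with ratio $M'/N$. This gives
\[
S_1=\left[\lambda_1\left(1-\sqrt{M'/N}\right)^2,\ \lambda_1\left(1+\sqrt{M'/N}\right)^2\right].
\]
Treat the complement analogously to get $S_2$, with variance $\lambda_2$ and ratio $(L-M')/N$.

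\textbf{Step 3 (solve the inclusion).} Assume without loss of generality that $\alpha\ge\alpha^*$, so that $\lambda_1\ge 1/L\ge\lambda_2$ by the trace identity. The binding constraint is then the upper edge of $S_1$:
\[
\sqrt{\lambda_1}\left(1+\sqrt{M'/N}\right)\le \tfrac{1}{\sqrt L}\left(1+\sqrt{L/N}\right).
\]
Set $u=N^{-1/2}$. The constraint becomes $\sqrt{\lambda_1}-\tfrac{1}{\sqrt L}\le u\left(1-\sqrt{\lambda_1M'}\right)$, where $1-\sqrt{\lambda_1M'}\ge0$ because $\lambda_1M'\le1$. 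Solving for $N$ gives
\[
N\le \frac{\left(1-\sqrt{\lambda_1M'}\right)^2}{\left(\sqrt{\lambda_1}-\tfrac1{\sqrt L}\right)^2}.
\]
Dividing by $L$ gives the stated expression for $\eta$. The $\max$ with $1$ reflects that the attack is only formulated for $N\ge L$.

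\textbf{Step 4 (remaining edges and the perfect-security case).} It remains to check that the lower edge of $S_1$ and both edges of $S_2$ stay inside $S_0$ at this threshold. Since $\lambda_2\le 1/L$ and the block ratios are smaller than $L/N$, I expect these to follow from monotonicity arguments. Finally, at $\alpha=\alpha^*$ we have $\lambda_1=\lambda_2=1/L$ for every $\rho$. The denominator then vanishes while the numerator stays positive, so $\eta\to\infty$: the population covariance is exactly isotropic, and no number of observations separates the secret subspace.

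The main obstacle is justifying Step~2. Replacing the full spiked two-block covariance by independent per-block Marchenko--Pastur laws with their own ratios is a heuristic, not the exact limiting spectrum. I would present it as the modelling assumption behind the criterion the paper invokes from \cite{vallet_improved_2012,furon_fast_2013}, and verify numerically that the predicted threshold matches.
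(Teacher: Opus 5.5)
Your Steps 1--3 follow the paper's proof. You use the same two-eigenvalue population covariance and the same supports $S_0$ (variance $1/L$, ratio $L/N$) and $S_1$ (variance $\lambda_1$, ratio $M'/N$). You solve an edge equality for $N$, take the $\max$ with $1$ from $N\ge L$, and get isotropy at $\alpha^*$. The paper handles $\lambda_1<1/L$ through the lower edges and calls the other case analogous, whereas you handle $\lambda_1\ge 1/L$ through the upper edges. That is not literally ``without loss of generality'', but the mirrored computation gives the same expression, so it is only a wording issue.

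The genuine gap is Step 4. For $\alpha>\alpha^*$ the upper edge of $S_2$ is indeed harmless. Its lower edge, however, is not settled by monotonicity: $\lambda_2<1/L$ pushes it down, while the smaller ratio $(L-M')/N<L/N$ pushes it up. The condition $\sqrt{\lambda_2}\,\bigl(1-\sqrt{(L-M')/N}\bigr)\ge L^{-1/2}\bigl(1-\sqrt{L/N}\bigr)$ gives its own threshold. It is the stated expression with $(\lambda_1,M')$ replaced by $(\lambda_2,L-M')$, and it can bind first.

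Take $M'=256$, $L=768$, $\rho=1$, $\alpha^2=0.35$. The $S_1$ condition fails only at $\eta\approx 274$, but the lower edge of $S_2$ already leaves $S_0$ at $\eta\approx 237$. Symmetrically, for $\alpha<\alpha^*$ the upper edge of $S_2$ can bind first: $\alpha^2=0.3$ gives $\eta\approx 44$ from $S_2$ versus $\approx 78$ from $S_1$. So under your criterion $S_1\cup S_2\subseteq S_0$ you would obtain the minimum of two such expressions, not the formula in the proposition.

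The paper's proof reaches the stated formula only because it explicitly takes $S_1\subseteq S_0$ alone as its security criterion and never examines $S_2$, despite the main text's $S_1\cup S_2$ wording. To prove the proposition as stated, drop Step 4 and adopt that narrower criterion explicitly. If you want to keep control of $S_2$, the statement itself must be amended to take the minimum of the two thresholds. Keeping $S_2$ is arguably the better notion, since locating the complement of the secret subspace reveals the subspace itself.
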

\begin{proof}
    See Appendix~\ref{subsec:proof_security}-\ref{subsec:proof_perfect_security}.
\end{proof}

Note that \method is always perfectly secure when $\alpha=1$ and $L=\Mprime$. This is also the regime that achieves maximum capacity. In other words, when the watermark space is used efficiently -- the codeword size matches watermarking space dimensions -- one should allocate all the energy to the watermarking signal. However, if the codeword size is smaller than $L$, such as in the case of other modern systems, one must waste energy in order to "drown" the watermarking signal into random noise. We validate the complete theoretical analysis empirically in Appendix~\ref{app:SNW-geometric}.
\section{Experiments}

\subsection{Experimental Settings}\label{sec:experimental_settings}

We evaluate the proposed \method watermarking scheme against state-of-the-art recent post-hoc baselines across the capacity-quality-security trade-off. 

\paragraph{Baselines watermarking systems}
We benchmark \method against three recent neural  network watermarking schemes: VideoSeal~\cite{fernandez_video_2024} and PixelSeal~\cite{soucek_pixel_2025}, which embed a 256-bit binary codeword, and TrustMark~\cite{bui_trustmark_2025}, which embeds a 100-bit binary codeword. 
To ensure a fair message-agnostic comparison, we whiten the output representation of all models on a set of $10^{6}$ MFlickr images, following the procedure described in Appendix~\ref{app:whitening}.

\paragraph{\method setup}
We train the \method encoder/decoder pair on the COCO 2017 train dataset~\cite{lin2014microsoft} using the multi-stage pipeline detailed in Appendix~\ref{app:training_details}. 
To isolate the contribution of our watermarking design from architectural choice, \method adopts the same architecture and augmentations as PixelSeal, \textit{except for the linear head}, which is removed.  
Although our framework supports continuous alphabets to achieve higher transmission rates, for fairness we constrain our evaluation to a binary alphabet when comparing with baselines.
Throughout this section, \method is parameterized with $\alpha^* = 1$ at $M^\prime = L$, guaranteeing optimal security against PCA key estimation attacks, as stated in Proposition~\ref{prop:security}.

\paragraph{Image and transforms dataset}
Evaluation is performed over $1000$ $1024 \times 1024$ natural images randomly sampled from the MFlickr dataset, disjoint from the COCO training set. 
To assess capacity under realistic channel degradations, all methods are tested against an extensive set of valuemetric and geometric transforms reported in Appendix~\ref{app:additional_results}(Table~\ref{tab:detailed_results}).

\paragraph{Evaluation protocol}
We evaluate the performance of a watermarking system along three characteristics:
\begin{itemize}
    \item \textit{Capacity:} The watermarking literature usually reports message bit-accuracy without error correcting code. Such a measure makes no sense: even a watermarking system which boasts a bit-accuracy of $0.997$ such as \textsc{PixelSeal} has a probability of decoding error of $1-0.997^{256}=0.53$; far too high to be of any use. We argue that a practical system must use error-correcting codes (ECC). In order to stay agnostic to the specific choice of ECC, as well as to the choice of probability of decoding error, we report the total Shannon capacity (in bits), computed from the empirical codeword bit-accuracy. Note that this empirical evaluation assumes each codeword bit to be independent, which is ensured with the whitening operation in Appendix~\ref{app:whitening}.
    \item \textit{Quality:} For a fair evaluation, we calibrate the watermark power to a fixed $48 \text{dB}$ PSNR across all methods. We assess the perceptual quality of the watermark via the standard LPIPS metric~\cite{zhang2018perceptual}.
    \item \textit{Security:} Evaluated through the $\eta$-security of Definition~\ref{def:security} when Eve uses a PCA attack for key estimation.
\end{itemize}

\subsection{Results}

\paragraph{Capacity -- Figure~\ref{fig:capacity}-\ref{fig:capacity_attacks}}

\method consistently achieves higher empirical capacity than existing baselines under a binary alphabet, achieving around $2.5\times$ the capacity of \textsc{PixelSeal} for both classical valuemetric and geometric operations as well as more recent blind erasure attacks based on diffusion models. Additional results covering extended transforms set and watermark erasure attacks are provided in Appendix~\ref{app:additional_results}(Tables~\ref{tab:detailed_results} and~\ref{tab:erasing_attacks}). 

\begin{figure}[t]
    \centering
    \begin{minipage}[c]{0.32\textwidth}
        \centering
        \includegraphics[width=\linewidth]{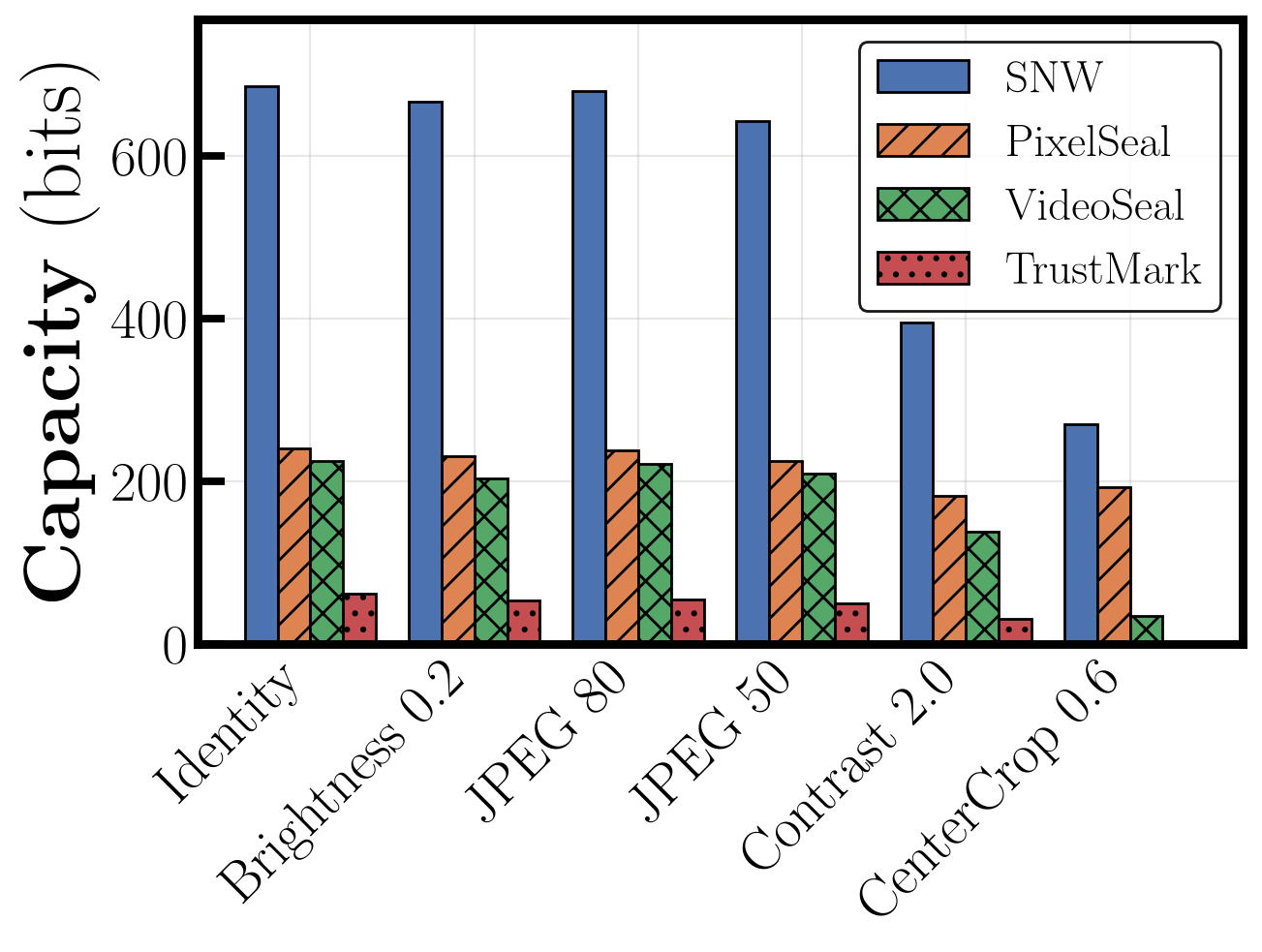}
        \caption{Comparison of recent watermarking systems in terms of capacity in bits over 1000 images. The evaluation is performed against classic image transformations.}
        \label{fig:capacity}
    \end{minipage}
    \hfill 
    \begin{minipage}[c]{0.32\textwidth}
        \centering
        \includegraphics[width=\linewidth]{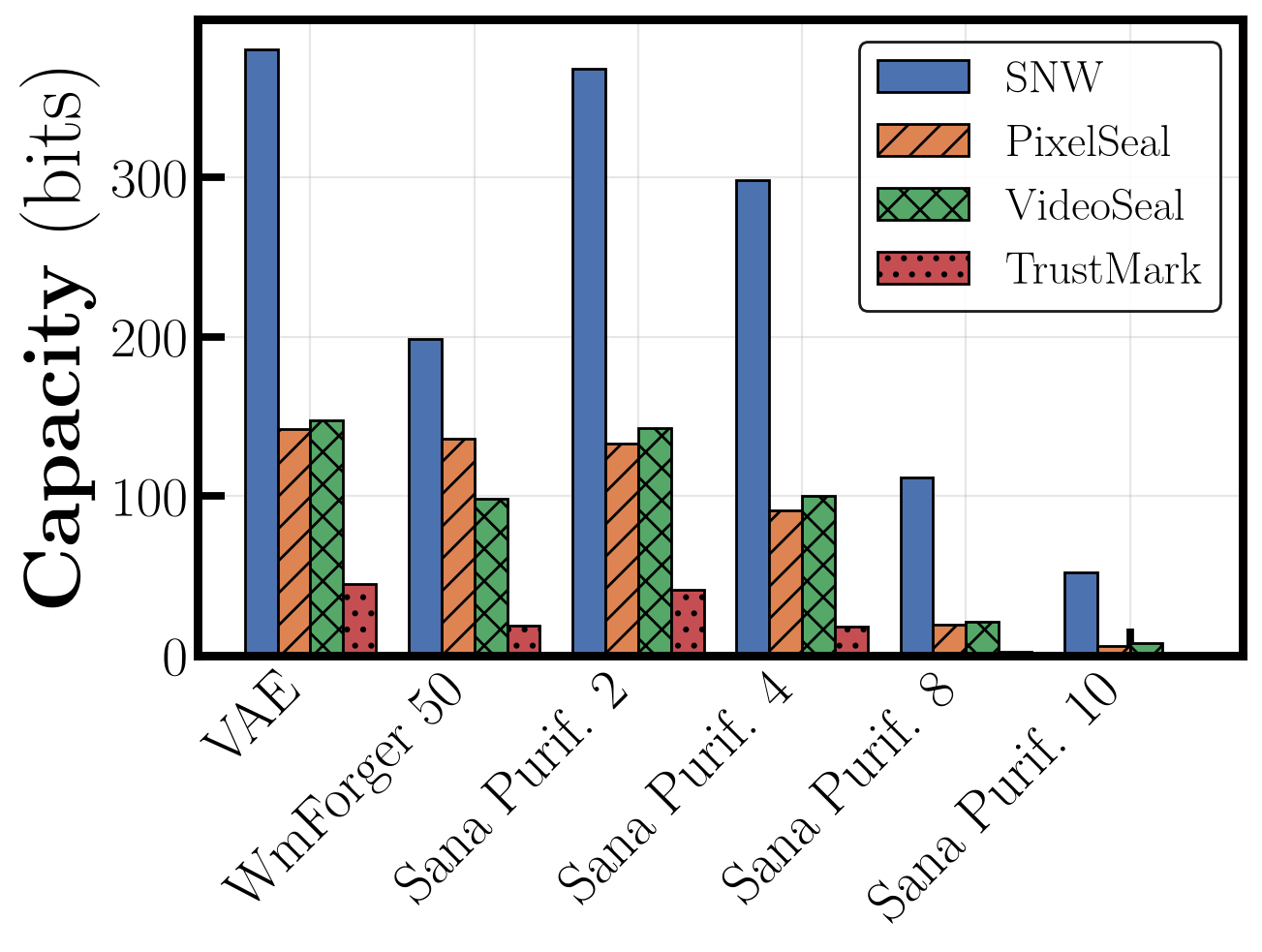}
        \caption{Comparison of recent watermarking systems in terms of capacity in bits over 1000 images. The evaluation is performed against recent watermarking erasure attacks.}
        \label{fig:capacity_attacks}
    \end{minipage}
    \hfill 
    \begin{minipage}[c]{0.32\textwidth}
        \centering
        \resizebox{\linewidth}{!}{%
        \begin{tabular}{l cc}
            \toprule
            \textbf{Method} & LPIPS $(\downarrow)$ & $\eta$-Security $(\uparrow)$ \\
            \midrule
            \textsc{PixelSeal} & 0.0034 & 1.00 \\
            \textsc{VideoSeal} & 0.0025 & 1.00 \\
            \textsc{TrustMark} & \textbf{0.0010} & 1.01 \\
            \method       & 0.0047 & $+\infty$ \\
            \bottomrule
        \end{tabular}%
        }
        \captionof{table}{Comparison between watermarking systems in terms of perceptual quality and security at a fixed watermarking power of 48 dB PSNR. The perceptual quality is measured with the LPIPS.}
        \label{tab:quality_security}
    \end{minipage}
\end{figure}

\paragraph{Quality -- Table~\ref{tab:quality_security}}
Although TrustMark achieves slightly lower perceptual distortion (at the cost of far lower capacity than all other schemes), all evaluated methods remain within the same order of magnitude.

\paragraph{Security -- Table~\ref{tab:quality_security} }
Following the discussion in Section~\ref{sec:channel}, a charitable analysis associates a security ratio $\eta =L(L+1)^{-1} \approx 1$ for keyless systems. Based on Proposition~\ref{prop:security}, under the regime $\alpha^* = 1$ for $M^\prime = L$, \method is perfectly secure against PCA secret key estimation.
\section{Conclusion}
The main impetus of this paper was to pull the focus away from training and architecture in modern watermarking design. Instead, we argued for a careful examination of the implicit channel of these systems. In addition to a lack of security, our analysis revealed an inefficient use of the watermarking space, yielding suboptimal capacity for a given neural network architecture. 
We introduced SNW, a novel post-hoc watermarking system that achieves $2.5 \times$ higher capacity while offering theoretical security guarantees through the use of a secret key. Beyond the question of pure performance, we made the design general enough to accommodate more powerful codebook constructions. In particular, the use of a continuous alphabet and corresponding capacity-achieving codes would be a natural extension, allowing to reach far better capacities at almost no cost.

\bibliography{iclr2027_conference}
\bibliographystyle{iclr2027_conference}

\appendix

\section{Notations}\label{app:notations}

\paragraph{Spaces}
\begin{itemize}
    \item \textbf{Pixel space}: in $\mathbb{R}^D$ with cover observations denoted $\bx$ and watermarked as $\bxwm$.
    \item \textbf{Latent space}: in $\mathbb{R}^L$ during decoding with observations denoted $\bz$ and steering vector denoted as $\bv$. In $\mathbb{R}^{L_e}$ during encoding.
    \item \textbf{Soft-codeword space}: in $\mathbb{R}^{M^\prime}$ with observations denoted as $\tilde{\bc}$.
\end{itemize}

\paragraph{Distributions} We use the standard notation for the standard Gaussian p.d.f $\phi$ and its c.d.f $\Phi$. Other distributions are usually referred to with calligraphic letters $(\mathcal{N})$. 

\paragraph{Other common notation} 
\begin{itemize}
    \item  $\mathbb{S}^{L-1}$: hypersphere embedded in a $L$-dimensionnal real space
    \item $\mathcal{A}^{\Mprime}$: either the binary alphabet $\{0,1\}^{\Mprime}$ or the continuous alphabet $\mathbb{R}^{\Mprime}$. 
    \item $\mathcal{F}$ always refers to the cover distribution -- see Def~\ref{def:cover_distrib}. 
    \item \textbf{Vectors}: vectors $\bv$ use lowercase boldface letters.
    \item \textbf{Functions}: functions $f$ use lowercase regular letters.
    \item \textbf{Constant and random variables}: both constant and r.v. use uppercase, regular Latin letters such as $C$ and $X$. Constants take letters from the start of the alphabet, whereas random variables take letters from the end. Random variables following multivariate distributions are not bolded.
\end{itemize}

\paragraph{Post hoc watermarking system} We refer to a \method system as an error-corrected watermarking system, where:
\begin{itemize}
    \item The set of secret keys $\mathcal{K}$, is given by the set of $L\times M^{\prime}$ matrices with columns summing to $1$. An element of this set is denoted as $\mathbf{U}$.
    \item The decoding projection function $f_d$ is any function from $\mathbb{R}^D$ to $\mathbb{R}^L$.
    \item The encoding projection function $(f_e,f_e^\dagger)$ are any function from $\mathbb{R}^D$ to $\mathbb{R}^{L_e}$ and from $\mathbb{R}^{L_e}$ to $\mathbb{R}^{L_e}$. Note that $L_e$ need not agree with $L$.
\end{itemize}
We assume the redundancy mechanism works at Shannon's capacity. A message is denoted as $\bm \in \{0,1\}^M$ and its representative codeword as $\bc \in \mathcal{A}^{\Mprime}$. 

For convenience, we use a slightly modified version of the sign function defined as:
\begin{align}
    \mathrm{sign}(x) = \begin{cases}
    1 &\text{ if } x > 0\\
    -1 &\text{ else }
    \end{cases}
\end{align}
Importantly, note that $\mathrm{sign}(0) = -1$.

\section{Formal Definitions}\label{app:definitions}
We herein make precise the distinction between a codeword and a message. This leads to a definition of an error-corrected watermarking system.

\begin{appendixdef}[Redundancy mechanism]
    A redundancy mechanism is composed of an encoder and a decoder $(c_{\mathrm{enc}},c_{\mathrm{dec}})$. The encoder $c_{\mathrm{enc}}$ is a bijection between messages and a subset $\mathcal{C} \in \mathcal{A}^{\Mprime}$ called the codebook. 
    For each codeword $\bc$ in the codebook, the decoder $c_{\mathrm{dec}}$ defines an equivalence class $[\bc] = \{\tilde{\bc} \in \mathcal{A}^{M^{\prime}} : c_{\mathrm{dec}}(\tilde{\bc} ) = \bc \}$ and where $c_{\mathrm{dec}} : \mathcal{A}^{M^{\prime}} \rightarrow  \mathcal{A}^{M^{\prime}}$. We abuse notation and make no distinction between the equivalence class $[\bc]$ and its representative codeword $\bc$.
    \label{def:redundancy_mechanism}
\end{appendixdef}

\begin{appendixdef}[Error-corrected watermarking system]
A $M$-bit watermarking system $\mathcal{W}$ equipped with a redundancy mechanism $(c_{\mathrm{enc}},c_{\mathrm{dec}})$ replaces its embedding functions $(e_k)_{k\in\keyset}$ by $e_k^{(c)} \triangleq e_k \circ c_{\mathrm{enc}}$. Its decision mechanism is replaced by a function $d^{(c)}_k \triangleq  c_{\mathrm{enc}}^{-1} \circ c_{\mathrm{dec}} \circ d_k$ where $d_k: \mathbb{R}^L \rightarrow \mathcal{A}^{M^{\prime}}$. We call $d^{(c)}_k$ the error-corrected decision mechanism; we still call $d_k$ the decision mechanism.
\label{def:error_corrected_ws}
\end{appendixdef}

We also characterize implicitly the fact that a watermarking system should not favor a message over another when it receives a cover image with the notion of cover distribution:
\begin{appendixdef}[Cover distribution with binary alphabet]
    Let $\mathcal{F}$  be a probability distribution with support in $\mathbb{R}^D$. It is said to be a cover distribution for a $M$-bit watermarking system $\mathcal{W}$ with a binary alphabet (i.e $\mathcal{A}^{\Mprime} \equiv \{0,1\}^{\Mprime}$) iff, for all secret keys $k \in \keyset$:
    \begin{equation}
        d_k\left(f(X)\right) \sim \mathcal{B}^M\left(\frac{1}{2}\right),
    \end{equation}
    where $X \sim \mathcal{F}$ and $\mathcal{B}^M$ is a $M$-dimensional Bernoulli distribution.
    \label{def:cover_distrib}
\end{appendixdef}

\section{Training details}\label{app:training_details}

In this section, we provide additional details on the training of the projection function of SNW. 
We do not claim that the procedure presented here is the unique or optimal training method to achieve the objectives introduced in Section~\ref{sec:method}.
The model has been trained on the COCO 2017 train set. 

\subsection{Architecture}
We adopt the neural network architecture proposed by PixelSeal. We remove the classification linear head to train only the ConvNeXt projection and the embedder. 
While PixelSeal embeds a binary message of dimension $M^\prime = 256$, we modify the input of the U-Net embedder to accept a continuous random unit vector of dimension $L = 768$. 
Keeping the PixelSeal backbone intact is essential to demonstrate that, with an identical architecture, their end-to-end training pipeline was suboptimal in terms of both capacity and security.

\subsection{Losses}
\label{app:losses}

We detail below the losses used during the training of the encoder-projection model. 
As stated in Section~\ref{sec:method}, the projection aims to:
\begin{itemize}
    \item Accurately embed and extract an arbitrary target unit vector,
    \item Ensure isotropy of the extracted representations across distinct codewords and non-watermarked images. 
\end{itemize}

\paragraph{Alignment losses $\mathcal{L}_{\text{align}}$}

\begin{itemize}
    \item \textbf{$\mathcal{L}_{\text{align}}^{\text{all}}$} Enforces that each spatial vector of the unpooled feature map aligns with the target vector $\bv$:
    \begin{equation}
    \mathcal{L}_{\text{align}}^{\text{all}} = \frac{1}{H^\prime W^\prime} \sum_{i=1}^{H^\prime}\sum_{j=1}^{W^\prime}
        \left(1 - \langle \bar{\bv}_{i,j}, \bv \rangle\right),
    \label{eq:loss_align}
    \end{equation}
    where $\bar{\bv}_{i,j} = \tilde{\bv}_{i,j} / \|\tilde{\bv}_{i,j}\|_2$.
    \item \textbf{$\mathcal{L}_{\text{align}}^{\text{pool}}$} Enforces that the pooled representation aligns with the target vector $\bv$:
    \begin{equation}
        \mathcal{L}_{\text{align}}^{\text{pool}} = 1 - \langle \tilde{\bv}, \bv \rangle.
        \label{eq:loss_align_pool}
    \end{equation}
\end{itemize}

\paragraph{Isotropy losses $\mathcal{L}_{\text{iso}}$}
\begin{itemize}
    \item \textbf{$\mathcal{L}_{\text{iso}}^{\text{same}}$} Penalizes correlations between representations of the same host image $\bx^{(i)}$ encoded with two independent targets $\bv_1$ and $\bv_2$:
    \begin{equation}
    \mathcal{L}_{\text{iso}}^{\text{same}} =
        \frac{1}{B} \sum_{i=1}^{B} \langle \tilde{\bv}_{1}^{(i)}, \tilde{\bv}_{2}^{(i)}\rangle^2.
    \label{eq:loss_iso_same}
    \end{equation}
    \item \textbf{$\mathcal{L}_{\text{iso}}^{\text{cross}}$} Penalizes correlations between representations of different host images $\bx^{(i)}$ and $\bx^{(j)}$ encoded with independent targets $\bv_1$ and $\bv_2$:
    \begin{equation}
    \mathcal{L}_{\text{iso}}^{\text{cross}} =
        \frac{1}{B(B-1)} \sum_{i \neq j} \langle \tilde{\bv}_{1}^{(i)}, \tilde{\bv}_{2}^{(j)} \rangle^2.
    \label{eq:loss_iso_cross}
    \end{equation}
    \item \textbf{$\mathcal{L}_{\text{iso}}^{\text{host}}$} Penalizes correlations between representations of unwatermarked host images $\bx^{(i)}$ and $\bx^{(j)}$:
    \begin{equation}
    \mathcal{L}_{\text{iso}}^{\text{host}} =
        \frac{1}{B(B-1)} \sum_{i \neq j} \langle \tilde{\bv}^{(i)}, \tilde{\bv}^{(j)} \rangle^2.
    \label{eq:loss_iso_host}
    \end{equation}
\end{itemize}

\paragraph{Quality losses $\mathcal{L}_{\text{qual}}$}
\begin{itemize}
    \item \textbf{$\mathcal{L}_{\text{qual}}^{\text{PSNR}}$} Hinge loss on the PSNR bounded by a target budget to ensure watermarking power:
    \begin{equation}
    \mathcal{L}_{\text{qual}}^{\text{PSNR}} =
        \operatorname{ReLU}\!\big(\tau(t) - \mathrm{PSNR}(\bx, \bx_w)\big),
    \label{eq:loss_psnr}
    \end{equation}
    where $\tau(t)$ is the target PSNR at step $t$ and $\bx_w$ is the host image $\bx$ watermarked with codeword $\bc$.
    \item \textbf{$\mathcal{L}_{\text{qual}}^{\text{LPIPS}}$} LPIPS loss to ensure perceptual quality:
    \begin{equation}
    \mathcal{L}_{\text{qual}}^{\text{LPIPS}} = \mathrm{LPIPS}(\bx, \bx_w).
    \label{eq:loss_lpips}
    \end{equation}
\end{itemize}

\paragraph{Anti-spoofing loss $\mathcal{L}_{\text{spoof}}$}

\begin{itemize}
    \item \textbf{$\mathcal{L}_{\text{spoof}}$} Enforces that a watermark residual $\bw_1^{(i)} = \bx_{w_1}^{(i)} - \bx^{(i)}$ yields an orthogonal representation when transferred onto an unrelated host image $\bx^{(j)}$:
    \begin{equation}
    \mathcal{L}_{\text{spoof}} = \frac{1}{B} \sum_{i=1}^{B} \langle f(\bx^{(j)} + \bw_1^{(i)}), \bv_1 \rangle^2,
    \label{eq:antispoofloss}
    \end{equation}
    where $j = (i \bmod B) + 1$, and $\bv_1$ is the target vector embedded into $\bx^{(i)}$. 
    See Appendix~\ref{app:spoof_loss} for further details on the residual spoofing attack.
\end{itemize}

\subsection{Training Procedure}

The training pipeline is divided into three stages involving different combinations of losses, as summarized in Table~\ref{tab:losses}.

\begin{table}[h]
    \centering
    \begin{tabular}{lccc}
    \toprule
        Losses & Stage 1 & Stage 2 & Stage 3 \\
        \midrule
        $\mathcal{L}_{\text{align}}^{\text{all}}$   & \cmark & \xmark & \xmark \\
        $\mathcal{L}_{\text{align}}^{\text{pool}}$  & \xmark & \cmark & \cmark \\
        $\mathcal{L}_{\text{iso}}^{\text{same}}$    & \cmark & \cmark & \cmark \\
        $\mathcal{L}_{\text{iso}}^{\text{cross}}$   & \xmark & \cmark & \cmark \\
        $\mathcal{L}_{\text{iso}}^{\text{host}}$    & \xmark & \cmark & \cmark \\
        $\mathcal{L}_{\text{qual}}^{\text{PSNR}}$   & \cmark & \cmark & \cmark \\
        $\mathcal{L}_{\text{qual}}^{\text{LPIPS}}$  & \xmark & \cmark & \cmark \\
        $\mathcal{L}_{\text{spoof}}$                & \xmark & \xmark & \cmark \\
        \bottomrule
    \end{tabular}
    \caption{Active losses per training stage.}
    \label{tab:losses}
\end{table}

\paragraph{Stage 1: Find the signal} The first stage of training aims to train the model to encode and extract a watermark signal when there are no transforms, while being isotropic. To find this signal and the isotropy, the model needs to remove the host content to keep only the signal. This is why, during the first step, the loss function is:
\begin{equation}
    \mathcal{L}_{1}  = \lambda_{\text{align}}^{\text{all}} \mathcal{L}_{\text{align}}^{\text{all}} + \lambda_{\text{iso}}^{\text{same}}\mathcal{L}_{\text{iso}}^{\text{same}} + \lambda_{\text{qual}}^{\text{PSNR}}\mathcal{L}_{\text{qual}}^{\text{PSNR}},
\end{equation}
with a target PSNR scaled from $ 0$ dB to $ 42$ dB. 
$\mathcal{L}_{\text{align}}^{\text{all}}$, $\mathcal{L}_{\text{iso}}^{\text{same}}$, and $\mathcal{L}_{\text{qual}}^{\text{PSNR}}$ are respectively described by Equations~\ref{eq:loss_align},~\ref{eq:loss_iso_same}, and~\ref{eq:loss_psnr}. 
We specifically use the align loss on all components of the representation before the mean pooling to enforce the model to find a signal. 
To provide robustness, we progressively add transforms to the augmentation layer. 
We use the same set of transforms used by PixelSeal~\cite{soucek_pixel_2025}.

\paragraph{Stage 2: Mean pooling of the signal} The second stage of the training consists of transferring the signal from the components of the representation to the pooled representation while improving the perceptual quality. With this step, we also want to improve the isotropy. 
To do so, we use the following loss function:
\begin{equation}
    \mathcal{L}_{2}  = \lambda_{\text{align}}^{\text{pool}} \mathcal{L}_{\text{align}}^{\text{pool}} + \lambda_{\text{iso}}^{\text{same}}\mathcal{L}_{\text{iso}}^{\text{same}} +
    \lambda_{\text{iso}}^{\text{cross}}\mathcal{L}_{\text{iso}}^{\text{cross}} +
    \lambda_{\text{iso}}^{\text{host}}\mathcal{L}_{\text{iso}}^{\text{host}} +
    \lambda_{\text{qual}}^{\text{PSNR}}\mathcal{L}_{\text{qual}}^{\text{PSNR}} +
    \lambda_{\text{qual}}^{\text{LPIPS}}\mathcal{L}_{\text{qual}}^{\text{LPIPS}}.
\end{equation}

Concerning the alignment, $\mathcal{L}_{\text{align}}^{\text{pool}}$ is used to train the model such that the pooled representation obtained is aligned with the target vector $\bv$. 
Regarding the perceptual quality, we add the LPIPS loss presented in~\ref{eq:loss_lpips}, as well as a JND attenuation at the encoding stage as proposed by PixelSeal~\cite{soucek_pixel_2025}. 
We also continue to enforce isotropy using the three losses $\mathcal{L}_{\text{iso}}^{\text{same}}$, $\mathcal{L}_{\text{iso}}^{\text{cross}}$, and $\mathcal{L}_{\text{iso}}^{\text{host}}$ that respectively request isotropy between the same image watermarked with different codewords,  between different images watermarked with different codewords, and between different non-watermarked images. 

\paragraph{Stage 3: Anti-spoofing fine-tuning}
To address the residual transfer attack, we add a fine-tuning stage with a few steps to train the model such that the detectability of a residual depends on the content of the image.
The loss used is:
\begin{equation}
    \mathcal{L}_3 = \mathcal{L}_2 + \lambda_{\text{spoof}}\mathcal{L}_{\text{spoof}}.
\end{equation}

More details about the anti-spoofing loss function and the residual transfer attack are provided in the Appendix~\ref{app:spoof_loss}.

\section{Proofs}

\subsection{Proposition~\ref{prop:bit_acc}}
\label{subsec:proof_bit_acc}

\begin{proposition}[Binary \method capacity]
    Under the perturbation model defined in Equation~\ref{eq:perturbation_model}, the probability $p(\rho)$ of correctly decoding a bit is given by:
    \begin{equation}
        p(\rho) = \Phi\left(\alpha\frac{\rho}{\sqrt{1 - \rho^2}}\sqrt{\frac{L}{M^\prime}} \right),
    \end{equation}
    where $\Phi$ is the standard Gaussian c.d.f.
    The resulting Shannon capacity is: $C_\rho = M^\prime \left(1 - h_2(p(\rho))\right)$.
\end{proposition}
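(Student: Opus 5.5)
Using the decision mechanism $d_\bU(\tilde{\bv}) = \bU^\top\tilde{\bv}$ followed by $\sign$, I will write the decoded soft value of coordinate $i$ and split it into a signal term and a noise term. By \eqref{eq:perturbation_model},
\begin{equation*}
    (\bU^\top\tilde{\bv})_i = \rho\,(\bU^\top\bv)_i + \sqrt{1-\rho^2}\,(\bU^\top\bn)_i .
\end{equation*}
Since $\bU^\top\bU = \bI_{\Mprime}$ and $\bU^\top(\bI_L - \bU\bU^\top) = 0$, the definition \eqref{eq:steering_vector} gives $\bU^\top\bv = \alpha \bc/\sqrt{\Mprime}$. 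With a binary codeword in $\{-1,+1\}^{\Mprime}$ (BPSK), the signal term is therefore $\rho\alpha c_i/\sqrt{\Mprime}$.

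For the noise term, $\bn$ is uniform on the unit sphere of $\bv^\perp$, and $\bU_i$ is a unit vector. I will use the standard high-dimensional approximation: a fixed unit direction projected onto a uniform point of $\mathbb{S}^{L-2}$ is approximately $\mathcal{N}(0,1/L)$. This follows from representing $\bn = \bg/\|\bg\|$ with $\bg$ Gaussian on $\bv^\perp$, so that $\|\bg\|^2 \approx L$ by concentration. The component of $\bU_i$ along $\bv$ is only $\alpha c_i/\sqrt{\Mprime}$, so the projection of $\bU_i$ onto $\bv^\perp$ has norm $\sqrt{1-\alpha^2/\Mprime} \approx 1$. The correction terms ($1/\Mprime$, and $L-1$ versus $L$) are lower order, and I will drop them. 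This gives $(\bU^\top\bn)_i \approx \mathcal{N}(0,1/L)$.

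Conditioning on $c_i$, a bit is decoded correctly when $c_i\,(\bU^\top\tilde{\bv})_i > 0$. This happens with probability
\begin{equation*}
    \P\!\left[Z > -\frac{\rho\alpha/\sqrt{\Mprime}}{\sqrt{1-\rho^2}/\sqrt{L}}\right] = \Phi\!\left(\alpha\frac{\rho}{\sqrt{1-\rho^2}}\sqrt{\frac{L}{\Mprime}}\right),
\end{equation*}
where $Z\sim\mathcal{N}(0,1)$ and the equality uses the symmetry of $\Phi$. By symmetry the result is the same for $c_i=\pm1$, so it holds unconditionally. The paper's convention $\sign(0)=-1$ concerns a null event and can be ignored.

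For the capacity, each coordinate becomes a binary symmetric channel with crossover probability $1-p(\rho)$. Its capacity is $1-h_2(p(\rho))$, using $h_2(p)=h_2(1-p)$. The noise coordinates are approximately i.i.d.\ Gaussian: they are asymptotically uncorrelated because the columns of $\bU$ are orthonormal. I will therefore treat the $\Mprime$ channels as independent parallel channels, and capacities add to give $C_\rho = \Mprime(1-h_2(p(\rho)))$.

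The main obstacle is justifying the Gaussian approximation and the independence across coordinates. The exact law of a projection of a uniform spherical vector is a scaled Beta distribution, not a Gaussian, and the coordinates are weakly dependent. I would handle this by stating the result in the regime $L,\Mprime$ large, with the Gaussian representation of $\bn$ and law-of-large-numbers concentration of $\|\bg\|$ carrying the argument. This matches the modeling level of the paper's earlier AWGN capacity proposition.
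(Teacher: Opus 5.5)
Your proposal is correct and follows the paper's proof essentially step for step. You project onto $\bU$ to isolate the signal $\rho\alpha c_i/\sqrt{\Mprime}$, approximate $(\bU^\top\bn)_i$ by $\mathcal{N}(0,1/L)$, and evaluate the sign-decision success probability with $\Phi$. You are more careful than the paper, which asserts the Gaussian approximation without noting the $1/(L-1)$ and $1-\alpha^2/\Mprime$ corrections, and which states $C_\rho=\Mprime(1-h_2(p(\rho)))$ without the parallel binary-symmetric-channel justification you give.
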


\begin{proof}

Let $\bv \in \mathbb{S}^{L-1}$ be the target unit direction encoded in the host content $\bx$ for codeword $\bc \in \{-1, 1\}^{M'}$ under the secret key matrix $\bU \in \mathbb{R}^{L \times M'}$. 
As defined in Equation~\ref{eq:steering_vector}, $\bv$ decomposes into the watermark subspace and its orthogonal complement:
\begin{equation}
    \bv = \alpha \, \frac{\bU \bc}{\sqrt{M'}} + \sqrt{1 - \alpha^2} \, \frac{(\mathbf{I}_L - \bU \bU^\top)\bq}{\|(\mathbf{I}_L - \bU \bU^\top)\bq\|_2},
\end{equation}
where $\bU^\top \bU = \mathbf{I}_{M'}$, $\|\bc\|_2 = \sqrt{M'}$, and $\bq \sim \mathcal{N}(\mathbf{0}, \mathbf{I}_L)$. 
By orthogonality, $\bU^\top (\mathbf{I}_L - \bU \bU^\top) = \mathbf{0}$, which yields:
\begin{equation}
    \bU^\top \bv = \alpha \, \frac{\bc}{\sqrt{M'}}.
    \label{eq:proof_u_v}
\end{equation}

We model the extracted normalized latent representation $\tilde{\bv} \in \mathbb{S}^{L-1}$ as a perturbed $\bv$ by a isotropic orthogonal noise component:
\begin{equation}
    \tilde{\bv} = \rho \bv + \sqrt{1 - \rho^2}\bn,
    \label{eq:noise_v}
\end{equation}
where $\bn \sim \mathcal{U}(\mathbb{S}^{L-2})$ is uniformly distributed on the unit sphere of the hyperplane orthogonal to $\bv$ ($\|\bn\|_2 = 1$, $\langle \bn, \bv \rangle = 0$).

Projecting $\tilde{\bv}$ onto the secret projection matrix $\bU$ gives:
\begin{align}
    \bU^\top \tilde{\bv} &= \rho \bU^\top \bv + \sqrt{1 - \rho^2} \bU^\top \bn \\
    &= \rho \alpha \frac{\bc}{\sqrt{M^\prime}} + \sqrt{1 - \rho^2}\bU^\top \bn 
\end{align}

Asymptotically, we have:
\begin{equation}
    (\bU^\top \bn)_j \sim \mathcal{N}\left(0, \frac{1}{L}\right)
\end{equation}
with
\begin{align}
    \mathbb{E}[(\bU^\top \bn)_j] &= 0 \\
    \operatorname{Var}[(\bU^\top \bn)_j] &= \frac{1}{L}.
\end{align}

We examine the decision on the $j$-th bit. Let $S_j = \alpha \frac{\rho}{\sqrt{M^\prime}} c_j$ denote the projected signal component and $B_j = \sqrt{1 - \rho^2} (\bU^\top \bn)_j$ denote the projected noise component. 
The noise term follows:
\begin{equation}
    B_j \sim \mathcal{N}\left(0, \, \sigma_B^2\right), \quad \text{with } \sigma_B = \sqrt{\frac{1 - \rho^2}{L}}.
\end{equation}

Without loss of generality, let's consider $c_j = +1$.
The probability of success is given by:

\begin{equation}
    \mathbb{P}\left(S_j + B_j > 0 \right) = \mathbb{P}\left(\alpha\frac{\rho}{\sqrt{M^\prime}} + B_j > 0 \right)
\end{equation}

Consequently, applying the standard normal CDF:
\begin{equation}
    p(\rho) = \Phi\left(\alpha \frac{\rho}{\sqrt{1-\rho^2}} \sqrt{\frac{L}{M^\prime}} \right).
\end{equation}

\end{proof}

\subsection{Proposition~\ref{prop:security}}
\label{subsec:proof_security}

\begin{proposition}[\method PCA security]
    A \method system is $\eta L$-secure against PCA attack, requiring at least $N = \eta L$ watermark observations to retrieve the watermark subspace, with:
\begin{equation}
    \frac{N}{L} = \eta = \max\left(1 , \frac{\left( 1 - \sqrt{\lambda_1 M'} \right)^2}{L \left( \sqrt{\lambda_1} - \frac{1}{\sqrt{L}} \right)^2}\right).
\end{equation}
In particular, by setting $\alpha$ to $\alpha^* := \sqrt{\frac{M^\prime}{L}}$ we have that $N \rightarrow \infty$ and the system is said to be \textit{perfectly secure} against PCA attacks.
\end{proposition}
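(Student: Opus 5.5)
The plan is to compute the population covariance of the decoded latent vectors, place the null and watermarked empirical spectra in the Marchenko--Pastur framework recalled just before the statement, and read $N$ off the inclusion criterion $S_1\cup S_2\subseteq S_0$.

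\textbf{Step 1: population covariance of $\tilde{\bv}$.} I model Eve's observations as $\bz_i=\tilde{\bv}_i$ under Equation~\ref{eq:perturbation_model}, with independent uniformly drawn codewords $\bc\in\{-1,1\}^{\Mprime}$ and independent $\bq$. From Equation~\ref{eq:steering_vector}, $\E[\bU\bc\bc^\top\bU^\top]/\Mprime=\bU\bU^\top/\Mprime$. The normalized projection of $\bq$ onto the complement is uniform on the unit sphere of that $(L-\Mprime)$-dimensional subspace, so its second moment is $(\bI_L-\bU\bU^\top)/(L-\Mprime)$. The cross term vanishes because $\E[\bc]=0$.

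For the noise, $\bn$ is uniform on the sphere orthogonal to $\bv$, so averaging over the uniform choice of $\bn$ gives $\E[\bn\bn^\top]\approx\bI_L/L$ to leading order in $L$. The cross term $\rho\sqrt{1-\rho^2}\,\E[\bv\bn^\top]$ vanishes by the symmetry $\bn\mapsto-\bn$.

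This yields $\E[\bz\bz^\top]=\lambda_1\bU\bU^\top+\lambda_2(\bI_L-\bU\bU^\top)$, with exactly the two eigenvalues of Equation~\ref{eq:eigenvalues}. As a consistency check, $\Mprime\lambda_1+(L-\Mprime)\lambda_2=1$. An unwatermarked, perfectly isotropic latent (enforced by $\mathcal{L}_{\text{iso}}$) has covariance $\bI_L/L$.

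\textbf{Step 2: the Marchenko--Pastur supports.} With $N$ samples, the null bulk is $S_0=\big[\tfrac1L(1-\sqrt{L/N})^2,\ \tfrac1L(1+\sqrt{L/N})^2\big]$. I treat the $\Mprime$-dimensional secret block as its own Wishart sub-problem with ratio $\Mprime/N$, which is the approximation implicit in the statement. This gives
\[
S_1=\big[\lambda_1(1-\sqrt{\Mprime/N})^2,\ \lambda_1(1+\sqrt{\Mprime/N})^2\big],
\]
and analogously $S_2$ with ratio $(L-\Mprime)/N$ and scale $\lambda_2$.

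\textbf{Step 3: solving the inclusion criterion for $N$.} By the criterion quoted before the statement, PCA recovers no direction of $\bU$ as long as $S_1\cup S_2\subseteq S_0$. The critical sample size is where an edge of $S_1$ leaves $S_0$. Taking square roots and equating the lower edges gives $\sqrt{\lambda_1}-\sqrt{\lambda_1\Mprime/N}=1/\sqrt L-1/\sqrt N$. This rearranges to $\sqrt N\,(\sqrt{\lambda_1}-1/\sqrt L)=\sqrt{\lambda_1\Mprime}-1$, hence
\[
N=\frac{(1-\sqrt{\lambda_1\Mprime})^2}{(\sqrt{\lambda_1}-1/\sqrt L)^2},
\]
and $\eta=N/L$. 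The $\max(1,\cdot)$ encodes the standing assumption $N\geq L$, below which the sample covariance is rank deficient and the attack is not defined. Because the trace is fixed, the $\lambda_2$ block lies on the opposite side of $1/L$. I would argue that it therefore exits $S_0$ no earlier than the $\lambda_1$ block, so the $\lambda_1$ condition is binding.

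\textbf{Step 4: perfect security.} Setting $\alpha=\alpha^*=\sqrt{\Mprime/L}$ gives $\lambda_1=\rho^2/L+(1-\rho^2)/L=1/L$. By the trace identity this also forces $\lambda_2=1/L$. The denominator then vanishes while the numerator $(1-\sqrt{\Mprime/L})^2$ stays positive for $\Mprime<L$, so $N\to\infty$. Equivalently, the population covariance is exactly $\bI_L/L$ and is independent of $\bU$, so no estimator based on second moments can recover the key. The degenerate case $\Mprime=L$ with $\alpha=1$ is covered directly by this isotropy argument rather than by the formula.

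\textbf{Main obstacle.} The delicate part is Step 3. Comparing the two spectra edge by edge with separate block ratios is heuristic: a rigorous version would use spiked or separated-bulk results for $\bI/L$ plus a low-rank perturbation, in the style of Baik--Ben~Arous--P\'ech\'e. I expect to need a case split on the sign of $\lambda_1-1/L$ to decide whether the lower or the upper edge separates first. I would first verify that the lower-edge equation is the binding one for $\lambda_1<1/L$, and check that the formula is symmetric enough to cover the other case.
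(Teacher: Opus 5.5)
Your route matches the paper's: the same population covariance with eigenvalues $\lambda_1,\lambda_2$, the same Marchenko--Pastur supports with ratios $L/N$ and $M'/N$, and the same lower-edge equation for $\lambda_1<1/L$. It also shares $\max(1,\cdot)$ from $N\ge L$ and $\lambda_1=\lambda_2=1/L$ at $\alpha^*$. The case split you anticipate is exactly the paper's, and it closes quickly. Since $M'\lambda_1\le 1$, for $\lambda_1>1/L$ the lower-edge equation has no positive root. The upper-edge equation instead gives $\sqrt N(\sqrt{\lambda_1}-1/\sqrt L)=1-\sqrt{\lambda_1 M'}$, which squares to the same $N$.

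The step that fails is your extra claim in Step~3: that, by the trace identity, the $\lambda_2$ block leaves $S_0$ no earlier than the $\lambda_1$ block. Lying on the opposite side of $1/L$ says nothing about which block separates first, and often the $\lambda_2$ block goes first. Take $L=768$, $M'=256$, $\rho=1$, $\alpha=1/2$, so that $\lambda_1=1/1024<1/L<\lambda_2=3/2048$. Then
\[
N_1=\frac{(1-\sqrt{M'\lambda_1})^2}{(\sqrt{\lambda_1}-1/\sqrt L)^2}\approx 1.07\times 10^{4},
\qquad
N_2=\frac{(1-\sqrt{(L-M')\lambda_2})^2}{(\sqrt{\lambda_2}-1/\sqrt L)^2}\approx 3.7\times 10^{3}.
\]
More generally, to leading order in $L\lambda_1-1$ one finds $N_2<N_1$ whenever $M'<L/2$. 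The complement block determines $\mathrm{span}(\mathbf{U})$ just as well, so under the criterion $S_1\cup S_2\subseteq S_0$ that you quote, the threshold would be $\max(1,\min(N_1,N_2)/L)$ rather than the stated formula. The paper's proof obtains the stated $\eta$ only because it takes $S_1\subseteq S_0$ alone as its definition of PCA security and never examines $S_2$. You should adopt that criterion explicitly instead of trying to show the $\lambda_1$ block is always binding, which is false. The perfect-security conclusion is unaffected, since at $\alpha^*$ both $N_1$ and $N_2$ diverge.
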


\begin{proof}

Let $\bv \in \mathbb{S}^{L-1}$ be the directional embedding vector defined in Equation~\ref{eq:steering_vector}:
\begin{equation}
    \bv = \alpha \, \frac{\bU \bc}{\sqrt{M^\prime}} + \sqrt{1 - \alpha^2} \, \bq^\prime, \quad \text{with} \quad \bq^\prime = \frac{(\mathbf{I}_L - \bU \bU^\top)\bq}{\|(\mathbf{I}_L - \bU \bU^\top)\bq\|_2},
\end{equation}
where $\bU \in \mathbb{R}^{L \times M'}$ satisfies $\bU^\top \bU = \mathbf{I}_{M'}$, $\bc \in \{-1, 1\}^{M'}$, and $\bq \sim \mathcal{N}(\mathbf{0}, \mathbf{I}_L)$.

We can compute the following covariance matrix:
\begin{align}
    &\mathbb{E}[\bq^\prime {\bq^\prime}^\top] = \frac{1}{L - M^\prime} (\mathbf{I}_L - \bU \bU^\top), \\
    &\mathbb{E}[\bc \bc^\top] = I_{M^\prime}, \\
    &\mathbb{E}[\bv \bv^\top] = \frac{\alpha^2}{M^\prime} \bU \bU^\top + \frac{1 - \alpha^2}{L - M^\prime} (\mathbf{I}_L - \bU \bU^\top).
\end{align}

Under the latent perturbation model $\tilde{\bv} = \rho \bv + \sqrt{1 - \rho^2} \, \bn$ with isotropic noise $\bn \sim \mathcal{U}(\mathbb{S}^{L-2})$ independent of $\bv$ (where $\mathbb{E}[\bn \bn^\top] = \frac{1}{L}\mathbf{I}_L$), the population covariance matrix of the watermarked latents is:
\begin{align}
    \mathbf{\Sigma} = \mathbb{E}[\tilde{\bv} {\tilde{\bv}}^\top] &= \rho^2 \mathbb{E}[\bv \bv^\top] + (1 - \rho^2) \mathbb{E}[\bn \bn^\top]  \\
    &= \rho^2 \frac{\alpha^2}{M'} \bU \bU^\top + \rho^2 \frac{1 - \alpha^2}{L - M'} (\mathbf{I}_L - \bU \bU^\top) + \frac{1 - \rho^2}{L} \mathbf{I}_L.
    \label{eq:proof_cov}
\end{align}

Because $\bU \bU^\top$ and $(\mathbf{I}_L - \bU \bU^\top)$ define mutually orthogonal projection operators, $\mathbf{\Sigma}$ is diagonalized in the basis of $\bU$ and its orthogonal complement.
It exhibits two distinct population eigenvalues:
\begin{align}
    \lambda_1 &= \rho^2 \frac{\alpha^2}{M^\prime} + \frac{1 - \rho^2}{L} \quad (\text{ for } M^\prime \text{ dimensions}),  \\
    \lambda_2 &= \rho^2 \frac{1 - \alpha^2}{L - M'} + \frac{1 - \rho^2}{L} \quad (\text{ for } L - M^\prime \text{ dimensions}). 
\end{align}

According to the Marchenko-Pastur distribution of eigenvalues for random matrix theory, the support of the non-watermarked latent space is:
\begin{equation}
    \mathcal{S}_0 = \left[ \lambda_0 \left(1 - \sqrt{\frac{L}{N}} \right)^2, \lambda_0 \left(1 + \sqrt{\frac{L}{N}} \right)^2 \right]
\end{equation}
where $N$ is the number of observations and $\lambda_0 = \frac{1}{L}$.

The support of the eigenvalues of the covariance matrix from the watermarked latent is:
\begin{equation}
    \mathcal{S}_1 = \left[ \lambda_1 \left(1 - \sqrt{\frac{M^\prime}{N}} \right)^2, \lambda_1 \left(1 + \sqrt{\frac{M^\prime}{N}} \right)^2 \right]
\end{equation}

We consider a model to be secure against PCA for $N = \eta L$ observations if $\mathcal{S}_1 \subseteq \mathcal{S}_0$.

If $\lambda_1 < \lambda_0$:

\begin{align}
    \lambda_1 \left( 1 - \sqrt{\frac{M^\prime}{N}}\right)^2 &= \lambda_0 \left( 1 - \sqrt{\frac{L}{N}}\right)^2 \\
    \sqrt{\lambda_1} - \sqrt{\lambda_0} &= \frac{\sqrt{\lambda_1 M^\prime} - \sqrt{\lambda_0 L}}{\sqrt{N}} \\
    \frac{N}{L} = \eta &= \frac{\left( \sqrt{\lambda_1 M^\prime} - 1 \right)^2}{L\left( \sqrt{\lambda_1} - \frac{1}{\sqrt{L}} \right)^2}
\end{align}

If $\lambda_1 > \lambda_0$, the result is the same and the proof is analogous.

Finally, since forming a full-rank empirical covariance matrix in $\mathbb{R}^L$ requires at least $N \ge L$ independent observations, we obtain the security threshold:
\begin{equation}
    \eta = \max\left(1 , \frac{\left( 1 - \sqrt{\lambda_1 M'} \right)^2}{L \left( \sqrt{\lambda_1} - \frac{1}{\sqrt{L}} \right)^2}\right).
\end{equation}

\end{proof}

\subsection{Proposition~\ref{prop:security} (Perfect Security)}
\label{subsec:proof_perfect_security}

\begin{proposition}[SNW perfect PCA security]
    The perfect security regime of SNW against PCA estimation is achieved when:
    \begin{equation}
    \alpha^* = \sqrt{\frac{M^\prime}{L}}
\end{equation}
\end{proposition}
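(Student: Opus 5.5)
The plan is to show that at $\alpha = \alpha^*$ the population covariance of the watermarked latents becomes exactly isotropic, equal to the covariance of unwatermarked latents. The Marchenko--Pastur supports then coincide for every $N$, and the threshold $\eta$ in Proposition~\ref{prop:security} diverges.

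\textbf{Step 1: compute $\lambda_1$ and $\lambda_2$ at $\alpha^*$.} Substitute $\alpha^{*2} = M^\prime/L$ into the two eigenvalues of Equation~\ref{eq:eigenvalues}. This gives
\[
\lambda_1 = \rho^2 \frac{M^\prime/L}{M^\prime} + \frac{1-\rho^2}{L} = \frac{1}{L},
\qquad
\lambda_2 = \rho^2 \frac{1 - M^\prime/L}{L-M^\prime} + \frac{1-\rho^2}{L} = \frac{1}{L}.
\]
Both equal $\lambda_0 = 1/L$, independently of $\rho$. Plugging this into Equation~\ref{eq:proof_cov} yields $\mathbf{\Sigma} = \frac{1}{L}\mathbf{I}_L$. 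So the watermarked latents have exactly the second-order statistics of the isotropic non-watermarked latents, and the secret subspace spanned by $\bU$ is not singled out by any eigen-direction.

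\textbf{Step 2: feed this into the formula of Proposition~\ref{prop:security}.} With $\lambda_1 = 1/L$, the denominator $L(\sqrt{\lambda_1} - 1/\sqrt{L})^2$ vanishes. The numerator $(1 - \sqrt{M^\prime/L})^2$ is strictly positive whenever $M^\prime < L$, so $\eta \to +\infty$. This is better argued directly at the level of supports rather than by a limiting substitution. Since the population covariance is exactly $\lambda_0 \mathbf{I}_L$, the watermarked empirical covariance $\mathbf{\Sigma}_N$ follows the same Marchenko--Pastur law as the unwatermarked one. Hence $\mathcal{S}_1 \cup \mathcal{S}_2 = \mathcal{S}_0$ for every $N$, the criterion $\mathcal{S}_1 \cup \mathcal{S}_2 \subseteq \mathcal{S}_0$ holds for all $N$, and no finite $N$ lets PCA separate the secret subspace. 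The case $M^\prime = L$, where $\alpha^* = 1$, follows as well: there $\mathbf{\Sigma} = \frac{1}{L}\mathbf{I}_L$ trivially, since $\bU$ is orthogonal and the codeword covariance is the identity.

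\textbf{Step 3: uniqueness and the main obstacle.} I would also note that $\alpha^*$ is the unique value making $\lambda_1 = \lambda_0$. The map $\alpha \mapsto \lambda_1$ is strictly increasing on $[0,1]$ for $\rho > 0$, so for any other $\alpha$ the spiked eigenvalue differs from $\lambda_0$ and $\eta$ is finite. The main obstacle is rigor in Step 2. The paper's support-based security criterion is asymptotic, and the model with $\mathcal{S}_1$ computed using $M^\prime/N$ is heuristic. I would therefore phrase perfect security as \emph{equality of population covariances}, which makes the watermarked and cover latent distributions indistinguishable to any second-order (PCA) estimator. I would not rely on the degenerate $0/0$-type formula, and would only remark that the closed form of $\eta$ diverges consistently with this conclusion.
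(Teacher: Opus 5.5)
Your proposal is correct and is essentially the paper's proof: the paper defines perfect security as strict isotropy of $\mathbf{\Sigma}$ in Equation~\ref{eq:proof_cov}, imposes $\lambda_1=\lambda_2$ and solves to get $\alpha^*=\sqrt{M^\prime/L}$, which is your Step~1 read in reverse, and its ``$\iff$'' corresponds to your uniqueness remark. Your additions are sound and go slightly beyond the paper: you identify the common eigenvalue as $\lambda_0=1/L$, and you argue via equal population covariances instead of the degenerate closed form for $\eta$. One small caveat: under the paper's per-block supports (ratio $M^\prime/N$) you only get $\mathcal{S}_1\cup\mathcal{S}_2\subseteq\mathcal{S}_0$ rather than equality, but inclusion is all the criterion requires.
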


\begin{proof}
Consider the population covariance matrix $\mathbf{\Sigma} = \mathbb{E}[\tilde{\bv} {\tilde{\bv}}^\top]$ derived in Equation~\ref{eq:proof_cov}:
\begin{equation}
    \mathbf{\Sigma} = \rho^2 \frac{\alpha^2}{M^\prime} \bU \bU^\top + \rho^2 \frac{1 - \alpha^2}{L - M^\prime} (\mathbf{I}_L - \bU \bU^\top) + \frac{1 - \rho^2}{L} \mathbf{I}_L.
\end{equation}
The population spectrum is characterized by the two eigenvalues $\lambda_1$ on the $M^\prime$-dimensional watermark subspace and $\lambda_2$ on the $(L - M^\prime)$-dimensional complementary subspace:
\begin{equation}
    \lambda_1 = \rho^2 \frac{\alpha^2}{M^\prime} + \frac{1 - \rho^2}{L}, \qquad
    \lambda_2 = \rho^2 \frac{1 - \alpha^2}{L - M^\prime} + \frac{1 - \rho^2}{L}.
\end{equation}

Perfect security against PCA is reached when the population covariance matrix is strictly isotropic, which occurs if $\lambda_1 = \lambda_2$:

\begin{align}
    \rho^2 \frac{(\alpha^*)^2}{M^\prime} + \frac{1 - \rho^2}{L} &= \rho^2 \frac{1 - (\alpha^*)^2}{L - M^\prime} + \frac{1 - \rho^2}{L}  \\
    \iff \quad \alpha^* &= \sqrt{\frac{M^\prime}{L}}.
\end{align}

\end{proof}

\section{Watermarking security and adversarial robustness}\label{app:secu_adv_robustness}
The concept of security in the modern post-hoc literature is nebulous.
It is seldom discussed in papers dedicated to novel architectures. Adversarial robustness is discussed only as a mean to improve image quality (HiDDeN's strategy ~\cite{ferrari_hidden_2018} or decoding performance (\textsc{PixelSeal}'s adversarial training~\cite{soucek_pixel_2025}). \textsc{TrustMark}~\cite{bui_trustmark_2025} mentions neither security nor adversarial robustness in the whole paper. To the best of our knowledge \textsc{SynthID-Image} is the \textit{only} work of this type to mention and discuss watermarking security specifically~\cite{gowal_synthid-image_2025}[Section 6].

Again, the \textsc{SynthID-Image} report is instructive. It conflates genuine watermarking security (watermark forgery, removal, and secret extraction) with adversarial robustness and, more surprisingly, with model extraction vulnerabilities. The confusion is especially interesting, since the authors seem to imply that adversarial machine learning is the main tool for attacking a watermarking system.

Similarly, the literature specialized in the security of watermarking does not differentiate security when talking about \begin{itemize}
\item Signal estimation and forgery based on classical watermarking analysis and signal processing~\cite{tarhini_neural_2026,bas_ai_2025}.
\item Spoofing and erasure attacks using adversarial machine learning techniques~\cite{gesny_modern_2026,soucek_transferable_2025,jiang_evading_2023}.
\end{itemize}

The first category is what would truly be called watermarking security in the classic literature. Citing the reference definition of Kalker~\cite{kalker_considerations_2001}, the goal of these attacks is to obtain:
\begin{quote}
    [...] unauthorized [...] access to the raw watermarking channel.
    In other words, watermark security refers to the inability of unauthorized users to remove, detect and estimate, write, or modify the raw watermarking bits. In particular, watermark security is not concerned with the semantics of the watermarking bits, but solely with the physical presence of the watermarking bits.
\end{quote}
The PCA attack studied in this paper falls into this category: estimating the key grants complete access to the watermarking channel, in the sense that any arbitrary image can now be spoofed with any codeword. This definition of security has a long history, culminating in the refined definitions of \textit{equivocation}~\cite{cayre_watermarking_2005-1} and \textit{effective key length}~\cite{bas_new_2013}, with our definition being a simplification of the combination of the two.

This is in contrast to adversarial machine learning, which can be leveraged against a specific part of the watermarking system -- the decoding projection. When they were not based on DNNs, such attacks were called "oracle attacks" or "sensitivity attacks" in the classical literature -- for example, see Broken-Arrows' "snake traps"~\cite{furon_broken_2008}[Section 5.2], or work from the early 1990's~\cite{cox1997public,linnartz1998analysis}. They do allow to perform erasure and spoofing attacks. But they do not, by themselves, fully compromise the system. Their output is a single perturbation, not always transferable (across images, systems, models, etc$\ldots$), that allows a specific operation (erasure, spoofing, $\ldots$.

\subsection{Adversarial robustness and Lipschitz networks}

Adversarial robustness assesses how much the decoding projection function $f$ against an adversarial pixel perturbation. In our scenario, such perturbations are crafted to spoof a watermark signal: 

\begin{definition}[Adversarial robustness]
     Let $\Psi : \mathbb{R}^D \rightarrow \mathbb{R}^D $ be an attack such that:
     
     \begin{equation}
         d_k(f_d(\Psi(\bx))) = \bc, \forall \bx \in \mathbb{R}^D, \forall k \in \mathcal{K}, \forall \bc \in \mathcal{A}^{M^\prime}
     \end{equation}
     
     We say that a (decoding) projection $f_d$ is $\epsilon$-adversarial-robust against $\Psi$ iff :
      \begin{equation}
          \mathbb{E}\left[||\Psi(X) - X||_2\right] \leq \epsilon 
      \end{equation}
      where the expectation is taken over $(X, C, K)$ triplets, with $X$ sampled from the cover distribution $\mathcal{F}$, and $(C, K)$ sampled uniformly from their respective set.
\end{definition}

In theory, we are not required to evaluate the adversarial robustness of a projection function empirically. We can leverage the fact that DNNs are often considered to be Lipschitz functions.
Formally, if a projection function $f_d$ is $L_{f_d}$-Lipschitz with respect to the $\ell_2$ norm, for any perturbation $\boldsymbol{\epsilon}$:
\begin{equation}
    \|f_d(\bx + \boldsymbol{\epsilon}) - f_d(\bx)\|_2 \le L_{f_d} \|\boldsymbol{\epsilon}\|_2.
\end{equation}

Consequently, inducing a target latent displacement $\|f_d(\bx + \boldsymbol{\epsilon}) - f_d(\bx)\|_2$ required to cross a decision boundary necessitates a pixel-space distortion strictly lower-bounded by:
\begin{equation}
\|\boldsymbol{\epsilon}\|_2 \ge \frac{\|f_d(\bx + \boldsymbol{\epsilon}) - f_d(\bx)\|_2}{L_{f_d}}.
\label{eq:lipschitz_bound}
\end{equation}

When $L_{f_d}$ is large, this theoretical lower bound vanishes, allowing imperceptible pixel noise to displace latents across the decision boundary.

\subsection{Lipschitz estimation}
\label{app:lipschitz_estimation}

Computing the global Lipschitz constant of a deep neural network is computationally intractable. 
We evaluate local stability by estimating the mean local Lipschitz constant across 100 randomly sampled images from the ImageNet dataset~\cite{deng2009imagenet} for each projection function $f$.

The local Lipschitz constant $L_{f_d}(\bx_0)$ of ${f_d}$ at a point $\bx_0$ corresponds to the spectral norm of its Jacobian matrix $J_f(\bx_0)$, which bounds the first-order sensitivity to local perturbations:
\begin{equation}
    L_{f_d}(\bx_0) = \| J_f(\bx_0)\|_2 = \sigma_{\text{max}}\left(J_{f_d}(\bx_0)\right),
\end{equation}
where $\sigma_{\text{max}}$ is the largest singular value. 

We estimate $\sigma_{\text{max}}$ using the power iteration algorithm proposed by~\cite{miyato2018spectral} because materializing the full Jacobian matrix is too computationally expensive for high-dimensional inputs such as images.  

We report estimates of the Lipschitz constant for the watermarking systems studied in this paper in Table~\ref{tab:lipschitz}. For comparison, we also add Broken-Arrows~\cite{furon_broken_2008} as a classical watermarking system that guarantees $L_f = 1$ by design through its Discrete Wavelet Transform (DWT), ensuring Euclidean distance conservation. 
In contrast, deep neural network projection functions are trained without Lipschitz constraints, leading to elevated Lipschitz constants and vulnerability to adversarial watermark erasure~\cite{gesny_modern_2026}.  
\begin{table}[h]
    \centering
    \begin{tabular}{ll|c}
        \toprule
         \textbf{WM} & Projection function & Lipschitz constant $L_f$\\
         \midrule
         \textsc{PixelSeal} & ConvNeXT & 1079 \\
         \textsc{VideoSeal} & ConvNeXT & 595 \\
         \textsc{TrustMark} & ResNet & 120 \\
         Broken-Arrows & DWT & 1\\
         \bottomrule
    \end{tabular}
    \caption{Comparison of empirical estimates of the Lipschitz constant $L_f$ of the projection function across schemes. Classical transforms guarantee distance preservation ($L_f = 1$), whereas unconstrained neural projection functions exhibit large empirical Lipschitz constants, exposing them to low-distortion removal. The estimation is performed over 100 ImageNet images.}
    \label{tab:lipschitz}
\end{table}

\section{Whitening}
\label{app:whitening}

Deep watermarking detectors yield biased and correlated output logits~\cite{gesny2026guidance}[Appendix C.1.1]. 
To ensure a fair comparison across the methods, we follow the whitening procedure established in~\cite{gesny2026guidance}[Appendix C.1.1]. 
For completeness, we summarize the methodology below. 

For each detector $\phi$, we compute the logits vector over $n = 10^6$ natural images from the MFlickr dataset.
We first compute the empirical bias:
\begin{equation}
    \mathbf{b}_\phi = \frac{1}{n} \sum_{i=1}^n\phi\left(\bx^{(i)}\right),
    \label{eq:bias_eq}
\end{equation}
and the covariance matrix:
\begin{equation}
    \mathbf{\Sigma}_\phi = \frac{1}{n-1} \sum_{i=1}^n \left( \phi\left(\bx^{(i)} \right) - \mathbf{b}_\phi\right)\left( \phi\left(\bx^{(i)} \right) - \mathbf{b}_\phi\right)^\top.
    \label{eq:cov_mat}
\end{equation}
Since $\mathbf{\Sigma}_\phi$ is symmetric positive semi-definite, we compute its eigendecomposition:
\begin{equation}
    \mathbf{\Sigma}_\phi = \mathbf{V}_\phi \mathbf{\Lambda}_\phi \mathbf{V}_\phi^\top,
\end{equation}
where $\mathbf{\Lambda}_\phi = \operatorname{diag}(\lambda_1, \dots, \lambda_d)$ contains the eigenvalues and $\mathbf{V}_\phi$ the corresponding orthonormal eigenvectors. 

The projection matrix is then defined as:
\begin{equation}
    \mathbf{W}_\phi = \mathbf{\Lambda}_{\phi}^{-\frac{1}{2}}.
\end{equation}
Finally, the whitened detector output is given by:
\begin{equation}
    \phi_w(\bx) = \mathbf{W}_\phi \left(\phi\left(\bx\right) - \mathbf{b}_\phi\right).
    \label{eq:whitened_decoder}
\end{equation}

\section{Residual transfer attacks}
\label{app:spoof_loss}

A critical requirement for watermarking is the content-dependency: the watermark signal must be tied to the host image, preventing an adversary from spoofing the watermark by transplanting the residual from a watermarked content. 
Let $\bx_i^{(1)} = \bx_i + \boldsymbol{\delta}_i^{(1)}$ be the watermarked version of host image $\bx_i$, where $\boldsymbol{\delta}_i^{(1)}$ is the residual watermark embedded in the pixel space.

While vanilla SNW exhibits residual leakage across hosts, we fix this by fine-tuning the model using the following loss:
    \begin{equation}
    \mathcal{L}_{\text{spoof}} = \frac{1}{B} \sum_{i=1}^{B} \langle f(\bx^{(j)} + \bw_1^{(i)}), \bv_1 \rangle^2,
    \end{equation}
where $j = i + 1 \mod B$, and $\bv_1$ is the target vector of image $\bx^{(i)}$. 

\begin{table}[t]
    \centering
    \small
    \begin{tabular}{l cc}
        \toprule
        \textbf{Method}& \multicolumn{2}{c}{Rate $R_\sigma$ / $R_\sigma\times M^\prime$} \\
        \cmidrule(lr){2-3}
        & Identity & Residual Transfer \\
        \midrule
        PixelSeal               & 0.910 / 232.9 & 0.003 / 0.8 \\
        VideoSeal               & 0.848 / 217.2 & 0.002 / 0.6 \\
        Trustmark               & 0.506 / 50.6 & 0.059 / 5.9 \\
        \midrule
        SNW (w/o anti-spoofing) & 0.883 / 678.0 & 0.113 / 86.5 \\
        SNW (w anti-spoofing)   & 0.888 / 682.2 & 0.000 / 0.2 \\
        \bottomrule
    \end{tabular}
    \caption{Bit accuracy of the methods against residual transfer attacks. The results are computed over 200 MFlickr $1024 \times 1024$ images.}
    \label{tab:residual_transferability}
\end{table}

As shown in Table~\ref{tab:residual_transferability}, SNW is vulnerable to residual transfer attacks.
TrustMark is also vulnerable, whereas PixelSeal and VideoSeal are robust. 
Fine-tuning SNW using the anti-spoofing loss~\ref{eq:antispoofloss} suppresses transferability while preserving decoding capacities.

\section{Additional results}\label{app:additional_results}

\subsection{SNW geometric analysis}\label{app:SNW-geometric}
\begin{figure}[h]
    \centering
    \includegraphics[width=1\linewidth]{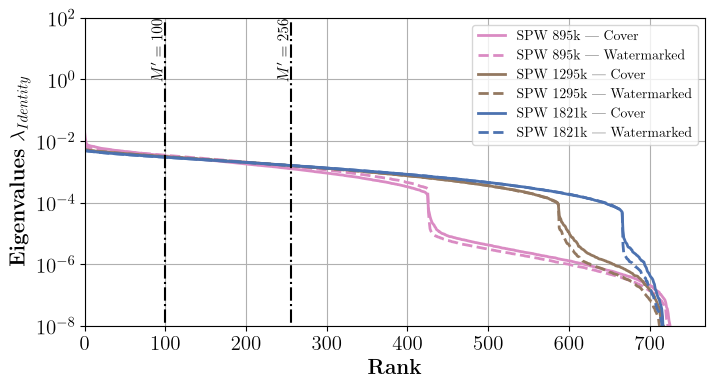}
    \caption{Eigenvalues 
    of the covariance $\Sigma_{\mathrm{Identity}}$ of the latent vector $f_d(\bxwm)$ (i.e. when $t$ is the identity) for different checkpoint of SNW. The number in the legend correspond to the number of training step.}
    \label{fig:latent-analysis-snw}
\end{figure}

We replicate the eigenvalue study in Section~\ref{sec:channel} for different checkpoints of \method in Figure~\ref{fig:latent-analysis-snw}. One can observe our training pipeline design to have succeeded for two reasons:
\begin{enumerate}
    \item As the number of steps increase, we "fill up" the number of robust components, i.e the non-zero eigenvalues. 
    \item The eigenvalues between cover and watermarked content cannot be distinguished. This validates our security analysis: by setting $\alpha=1$ for SNW, one cannot recover the secret key through any PCA or other second-order attack: indeed one cannot distinguish between the covariance of a cover and of a watermarked image.
\end{enumerate}

\subsection{Recent erasing attacks}

In this section, we compare the capacity of modern post-hoc watermarking schemes against recent erasing attacks.
The erasing attacks studied are:
\begin{itemize}
    \item \textbf{VAE Purification}~\cite{nie2022DiffPure} Encode and decode the watermarked image with a frozen VAE. 
    \item \textbf{DiffPure}~\cite{nie2022DiffPure} Invert the last $t$ steps of diffusion and regenerate them. 
    \item \textbf{WM Forger}~\cite{soucek_transferable_2025} Use a preference model to perform a gradient ascent on the image. 
\end{itemize}

\begin{table}[h]
    \centering
    \resizebox{\columnwidth}{!}{%
    \begin{tabular}{l ccccc}
        \toprule
        \textbf{Method} & Identity & VAE & Sana (2 steps) & Sana (4 steps) & Sana (8 steps) \\
        \midrule
        PixelSeal & 0.939 / 240.5 & 0.555 / 142.0 & 0.519 / 132.8 & 0.357 / 91.3 & 0.076 / 19.4 \\
        VideoSeal & 0.884 / 226.2 & 0.576 / 147.6 & 0.557 / 142.6 & 0.392 / 100.4 & 0.084 / 21.5 \\
        TrustMark & 0.632 / 63.2 & 0.453 / 45.3 & 0.415 / 41.5 & 0.184 / 18.4 & 0.023 / 2.3 \\
        SNW       & 0.892 / \textbf{684.7} & 0.495 / \textbf{380.3} & 0.479 / \textbf{368.1} & 0.388 / \textbf{298.3} & 0.145 / \textbf{111.5} \\
        \midrule
        \textbf{Method} & Sana (10 steps) & Sana (20 steps) & WM Forger 50 steps & WM Forger 100 steps &   \\
        \midrule
        PixelSeal & 0.025 / 6.4 & 0.003 / 0.8 & 0.533 / 136.3 & 0.280 / 71.8 &  \\
        VideoSeal & 0.030 / 7.8 & 0.003 / 0.9 & 0.385 / 98.5 & 0.191 / 48.8 &  \\
        TrustMark & 0.0152 / 1.5 & 0.007 / 0.7 & 0.187 / 18.7 & 0.078 / 7.8 &  \\
        SNW       & 0.068 / \textbf{52.2} & 0.001 / 0.8 & 0.312 / \textbf{239.5} & 0.139 / \textbf{106.5} &  \\
        \bottomrule
    \end{tabular}%
    }
    \caption{Capacity of the watermarking systems against recent watermarking erasure attacks. The results are computed over 200 MFlickr $1024 \times 1024$ images.}
    \label{tab:erasing_attacks}
\end{table}

\subsection{Detailed results}\label{app:detailed_results}

In this section, we provide the capacity of modern post-hoc watermarking methods against a full benchmark of classic transformations. 

\begin{table}[h]
    \centering
    \resizebox{\columnwidth}{!}{%
    \begin{tabular}{l ccccc}
        \toprule
        \multirow{2}{*}{\textbf{Method}} & \multicolumn{5}{c}{Rate $R_\sigma$ / $R_\sigma\times M^\prime$} \\
        \cmidrule(lr){2-6}
        & \textbf{Identity} & \textbf{Brightness} $+ 0.2$ & \textbf{Contrast} $\times 2$ & \textbf{JPEG QF} $=80$ & \textbf{JPEG QF} $=50$ \\
        \midrule
        PixelSeal & 0.939 / 240.5 & 0.899 / 230.1 & 0.709 / 181.6 & 0.928 / 237.5 & 0.876 / 224.4 \\
        VideoSeal & 0.878 / 224.7 & 0.793 / 203.1 & 0.539 / 138.1 & 0.863 / 221.0 & 0.816 / 209.0 \\
        TrustMark & 0.611 / 61.1  & 0.530 / 53.0  & 0.309 / 30.9  & 0.545 / 54.5  & 0.491 / 49.1  \\
        SNW       & 0.983 / \textbf{686.0} & 0.868 / \textbf{666.3} & 0.514 / \textbf{394.6} & 0.884 / \textbf{679.0} & 0.837 / \textbf{642.5} \\
        \cmidrule(lr){2-6}
        & \textbf{Gaussian Blur} $3 \times 3$, $\sigma=1$ & \textbf{Rotation} $90^\circ$ & \textbf{Horizontal Flip} & \textbf{Hue} 0.5 & \textbf{Saturation} 1.5 \\
        \midrule
        PixelSeal & 0.939 / 240.3 & 0.863 / 221.0 & 0.940 / 240.6 & 0.736 / 188.5 & 0.928 / 237.6 \\
        VideoSeal & 0.877 / 224.5 & 0.821 / 210.1 & 0.866 / 221.6 & 0.652 / 166.8 & 0.858 / 219.5 \\
        TrustMark & 0.610 / 61.0  & 0.007 / 0.7   & 0.612 / 61.2  & 0.389 / 38.9  & 0.573 / 57.3  \\
        SNW       & 0.892 / \textbf{685.1} & 0.633 / \textbf{486.5} & 0.830 / \textbf{637.7} & 0.706 / \textbf{542.6} & 0.878 / \textbf{674.5} \\
        \cmidrule(lr){2-6}
        & \textbf{Crop} $90\%$ & \textbf{Crop} $80\%$ & \textbf{Crop} $70\%$ & \textbf{Crop} $60\%$ & \textbf{Crop} $50\%$ \\
        \midrule
        PixelSeal & 0.910 / 232.9 & 0.890 / 227.7 & 0.845 / 216.4 & 0.752 / 192.5 & 0.561 / \textbf{143.7} \\
        VideoSeal & 0.768 / 196.7 & 0.636 / 162.9 & 0.389 / 99.6  & 0.131 / 33.6  & 0.019 / 4.8   \\
        TrustMark & 0.694 / 69.4  & 0.538 / 53.8  & 0.016 / 1.6   & 0.008 / 0.8   & 0.007 / 0.7   \\
        SNW       & 0.811 / \textbf{623.2} & 0.725 / \textbf{556.6} & 0.567 / \textbf{435.6} & 0.351 / \textbf{269.6} & 0.133 / 102.0  \\
        \cmidrule(lr){2-6}
        & \textbf{Resize} 0.5 & \textbf{Median Filter} $3 \times 3$ & & & \\
        \midrule
        PixelSeal & 0.937 / 239.9 & 0.938 / 240.1 & & & \\
        VideoSeal & 0.876 / 224.2 & 0.877 / 224.4 & & & \\
        TrustMark & 0.610 / 61.0  & 0.611 / 61.1  & & & \\
        SNW       & 0.891 / \textbf{684.1} & 0.892 / \textbf{685.0} & & & \\
        \bottomrule
    \end{tabular}%
    }
    \caption{Capacity of the watermarking systems against an extensive set of classic image transformations. The results are computed over 1000 MFlickr $1024 \times 1024$ images.}
    \label{tab:detailed_results}
\end{table}

\subsection{Examples}
In this section, we provide a qualitative example of images watermarked with SNW. 

\begin{figure}
    \centering
    \includegraphics[width=\linewidth]{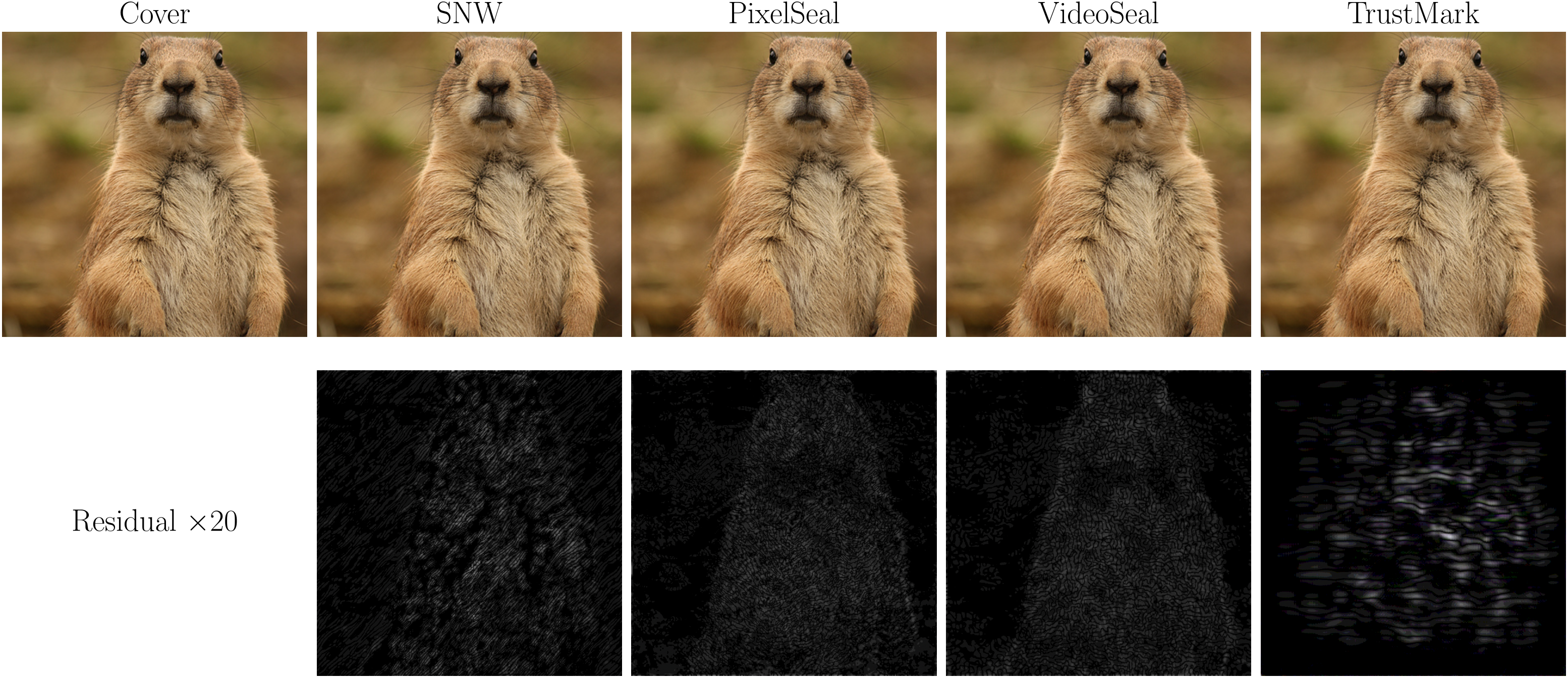}
    \caption{Example of an image watermarked with different methods and associated residuals for a fixed watermark power of 48 dB PSNR.}
    \label{fig:example}
\end{figure}

\end{document}